\definecolor{myblue}{rgb}{0,0,0.9}
\definecolor{mygray}{rgb}{0.9,0.9,0.9}
\definecolor{mymauve}{rgb}{0.58,0,0.82}
\newcommand{\ALOOP}[1]{\ALC@it\algorithmicloop\ #1%
  \begin{ALC@loop}}
\newcommand{\ENDALOOP}{\end{ALC@loop}\ALC@it\algorithmicendloop}
\newtheorem{definition}{\textbf{\emph{Definition}}}
\newtheorem{theorem}{\textbf{\emph{Theorem}}}
\definecolor{garrisonpink1}{rgb}{0.858, 0.188, 0.478}
\journal{Computers \& Security}
\begin{document}

\begin{frontmatter}

\title{SEDML: \underline{S}ecurely and \underline{E}fficiently Harnessing \underline{D}istributed Knowledge \\ in \underline{M}achine \underline{L}earning}

\author[rvt]{Yansong Gao}
\ead{yansong.gao@njust.edu.cn}

\author[rvt]{Qun Li}
\ead{120106222757@njust.edu.cn}

\author[focal]{Yifeng Zheng\corref{cor1}}
\ead{yifeng.zheng@hit.edu.cn}

% \author[focal]{Qun Li}
% \ead{omid.kavehei@rmit.edu.au}

\author[rvt]{Guohong Wang}
\ead{wgh@njust.edu.cn}

\author[rvt]{Jiannan Wei}
\ead{jnwei@njust.edu.cn}

\author[rvt]{Mang Su}
\ead{sumang@njust.edu.cn}

\cortext[cor1]{Corresponding author}
\address[rvt]{School of Computer Science and Engineering,
Nanjing University of Science and Technology, Nanjing, JiangSu, China}
\address[focal]{School of Computer Science and Technology,
Harbin Institute of Technology, Shenzhen, Guangdong 518055, China.}

\begin{abstract}
Training high-performing deep learning models require a rich amount of data which is usually distributed among multiple data sources in practice.
Simply centralizing these multi-sourced data for training would raise critical security and privacy concerns, and might be prohibited given the increasingly strict data regulations.
To resolve the tension between privacy and data utilization in distributed learning, a machine learning framework called private aggregation of teacher ensembles (PATE) has been recently proposed.
PATE harnesses the knowledge (label predictions for an unlabeled dataset) from distributed teacher models to train a student model, obviating access to distributed datasets.
Despite being enticing, PATE does not offer protection for the individual label predictions from teacher models, which still entails privacy risks.
In this paper, we propose SEDML, a new protocol which allows to securely and efficiently harness the distributed knowledge in machine learning.
SEDML builds on lightweight cryptography and provides strong protection for the individual label predictions, as well as differential privacy guarantees on the aggregation results.
Extensive evaluations show that while providing privacy protection, SEDML preserves the accuracy as in the plaintext baseline. Meanwhile, SEDML outperforms the state-of-the-art work by $43\times$ in computation and $1.23\times$ in communication.

\end{abstract}

\begin{keyword}
Distributed learning, knowledge transfer, privacy protection, secure computation, differential privacy
\end{keyword}

\end{frontmatter}

\section{Introduction}

Deep learning (DL) models have been employed in various applications including medical diagnosis, speech recognition, face recognition, and financial fraud detection~\cite{lecun2015deep,bakator2018deep}, given their unprecedented performance. Training DL models for a high accuracy performance relies on rich data, which is usually collected from multiple data sources and aggregated in a centralized data center.
However, sharing the data directly to a single centralized party for training is not always possible because of severe privacy concerns, especially for sensitive data such as medical images and bank information. In addition, the data aggregator must pay great attention to the data regulations such as the General Data Protection Regulation (GDPR, effective from May,
2018)~\cite{gdpr}, California Privacy Rights Act (CPRA, effective from Jan,
2021)~\cite{cpra}, and China Data Security Law (CDSL, effective from Sep 2021) \cite{cdsl}. The aggregator could face severe legal issues whenever the collected data is misused or inappropriately processed. One solution is to training the DL model over centralized but encrypted data. One representative work is by Mohassel \textit{et al.} \cite{mohassel2017secureml}, which allows secure model training over encrypted data for protecting data privacy. However, the substantial amount of data also introduces major challenges to DL such as high data dimensionality and model scalability \cite{najafabadi2015deep,chen2014big}.
A more pragmatic solution is distributed learning that does not need to access local data, which can significantly reduce privacy leakages while still harnessing the distribut$\-$ed isolated rich data~\cite{gao2021evaluation}. 
One popular paradigm is Federated Learning (FL)~\cite{mcmahan2017communication}, where clients perform local training and only share model updates rather than raw data to the aggregator for updating a global model. 
Nonetheless, the model update parameters still expose notable information that can be exploited by an adversary to infer private client data through, e.g., membership inference attack \cite{rahman2018membership} and data inversion attack \cite{alves2019mlprivacyguard,khosravy2021model}.

To further reduce the information communicated with the aggregation server in distributed learning, Papernot \textit{et al.}~\cite{papernot2016semi} recently proposed a new machine learning framework, namely Private Aggregation of Teacher Ensembles, or PATE for short. Generally, the PATE framework harnesses the knowledge from distributed teacher models to train a student model~\cite{papernot2018scalable}. 
The teacher models are locally trained by clients over their private datasets and utilized to produce label predictions for an unlabeled training dataset queried by a requester.
The label predictions from multiple teacher models are aggregated by an aggregator or service provider, which are then returned to the requester for usage in training the student model.
Hence, the whole process avoids access to the clients' local datasets. 
To mitigate the potential leakages from the aggregated labels revealed to the requester, they also employ differential privacy and add calibrated noises in the aggregation process.
In this way, the PATE framework not only ensures the practicability of the model training, but also provides some privacy protection.

However, the direct expose of the individual label predictions to the aggregator could still leak private information about the teacher models or even the local datasets.

Given this, Xiang \textit{et al.}~\cite{XiangWWL20} recently proposed a design which leverages homomorphic encryption to support secure aggregation under the PATE framework, ensuring the confidentiality of individual label predictions.
Nevertheless, their design relies on expensive homomorphic encryption and suffers from expensive computation overheads.
Meanwhile, their design poses practical constraints on the clients which have to stay online for active participation (more detailed discussions can be found in Section \ref{sec:related}).

In light of the above, in this paper, we propose a new design for securely and efficiently harnessing the distribut$\-$ed knowledge in machine learning. Our design follows the machine learning paradigm in PATE for knowledge transfer, yet provides strong protection for the confidentiality of individual label predictions throughout the aggregation procedure, as well as ensures differential privacy guarantee on the aggregated labels. 
In comparison with the state-of-the-art \cite{XiangWWL20}, our design constructively takes advantage of lightweight additive secret sharing and promises much better practical efficiency.
We summarize our main contributions below:

\begin{itemize}
    \item  We propose SEDML, a new protocol for securely and efficiently harnessing the distributed knowledge in machine learning. SEDML builds on lightweight cryptography and ensures strong protection for individual label predictions during the secure aggregation procedure, and differential privacy on the aggregated labels.
    % that not only ensures the security of the aggregation process that is mainly comprised by operations of secure comparison, but also can be implemented in a lightweight manner.
    \item  We propose an efficient method in the secret sharing domain for secure identification of the highest (noisy) vote count during the secure aggregation procedure.
    This method is mainly based on secure extraction of the most significant bit in the secret sharing domain so as to allow efficient and secure comparison.

    \item We conduct a comprehensive performance evaluation on SEDML, in terms of computation, communication, and accuracy. Extensive results validate that the accuracy performance in SEDML is comparable to the plaintext baseline. Meanwhile, compared to the state-of-the-art security design \cite{XiangWWL20}, SEDML takes $43\times$ less computation time and $1.23\times$ less communication.

\end{itemize}

The rest of the paper is organized as follows. Section~\ref{sec:preliminary} introduces some preliminaries. 
Section \ref{sec:problem} gives the problem statement.
Section \ref{sec:sedml} presents the detailed design of SEDML.
Section \ref{sec:experiment} shows the experiment results.
Section \ref{sec:related} discusses the related work.
Section \ref{sec:conclusion} concludes this paper.

\section{Preliminaries}\label{sec:preliminary}
\subsection{Knowledge Transfer via Aggregating Teacher Ensembles}

The PATE framework proposed in \cite{papernot2018scalable} harnesses the knowledge from distributed teacher models to train the student model. 
This framework is advantageous in that the training of the student model does not need to access local datasets held by a set of clients.

There are three components in the framework: teacher model, aggregation mechanism, and student model.
Each teacher model is independently trained using a client's local dataset, which could be privacy-sensitive.

The student model is trained under the supervision of teacher models by distilling knowledge from all teacher models based on an aggregation mechanism.
In particular, it is assumed that the requester who wants to train the student model has access to a public but unlabeled dataset.
% , which is distributed to the clients holding the teacher models.
%
Each teacher model produces a label prediction for each sample in the dataset.
The label predictions on each example are then aggregated in plaintext domain through a dedicated mechanism with differential privacy guarantees, which produces a aggregated label.
The student model is then trained on those samples labeled through the aggregation mechanism.

\subsection{Additive Secret Sharing}\label{sec:secretsharing}

Our design will rely on a lightweight cryptographic technique, additive secret sharing, to achieve a secure and efficient realization for aggregating teacher ensembles.
In particular, we will make use of $2$-of-$2$ additive secret sharing.

Given a value $\alpha \in \mathbb{Z}_{2^l}$, its $2$-of-$2$ additive secret sharing is a pair $([\alpha]_0=\alpha-r$, $[\alpha]_1=r)$, where $r$ is a random value in $\mathbb{Z}_{2^l}$ and the subtraction is done in $\mathbb{Z}_{2^l}$ (i.e., result is modulo $2^l$).
Each share reveals no information about the original value $\alpha$.

Suppose that two values $\alpha$ and $\beta$ are secret-shared among two parties $\mathcal{P}_0$ and $\mathcal{P}_1$, i.e., $\mathcal{P}_0$ holds $[\alpha]_0$ and $[\beta]_0$ while $\mathcal{P}_1$ holds $[\alpha]_1$ and $[\beta]_1$.
The secret sharing $[\alpha\pm \beta]$ can be computed locally where each party $\mathcal{P}_i$ ($i\in \{0,1\}$) directly computes $[\alpha\pm\beta]_{i}=[\alpha]_{i}\pm [\beta]_{i}$.
Multiplication by a constant $\gamma$ on the value $\alpha$ can also be done locally, i.e., $[\alpha\cdot \gamma]_{i}=\gamma \cdot [\alpha]_{i}$. 
Multiplication over $[\alpha]$ and $[\beta]$ can be supported by using the Beaver's multiplication triple \cite{Beaver91a,Corrigan-GibbsB17}.
That is, given the secret sharing of a multiplication triple $(t_1,t_2,t_3)$ where $t_3=t_1\cdot t_2$, $[\alpha\cdot \beta]$ can be obtained with one round of interaction between the two parties.
In particular, each party $\mathcal{P}_i$ first computes $[e]_i=[\alpha]_i - [t_1]_i$ and $[f]_i=[\beta]_i - [t_2]_i$.
Then, $\mathcal{P}_i$ broadcasts $[e]_i$ and $[f]_i$, and recovers $e$ and $f$.
Lastly, $P_i$ computes $[\alpha \cdot \beta]_i=i \cdot e \times f+ [t_1]_i \times f + [t_2]_i \times e + [t_3]_i$.

\subsection{Differential Privacy}

Differential privacy \cite{DworkMNS06} is a rigorous privacy notion which, intuitively, ensures that the output of aggregate statistics computation over a database is insensitive to changes in any data record.

\begin{definition}
\noindent\textbf{(($\epsilon$,$\delta$)-differential privacy)} A randomized mechanism $\mathcal{M}$ with domain $\mathcal{D}$ and range $\mathcal{R}$ satisfies($\epsilon$,$\delta$)-differential privacy if for any two adjacent inputs D, D$'$ $\in$ $\mathcal{D}$ and for any output S $\subseteq$ $\mathcal{R}$ it holds that:
\begin{equation}
Pr[\mathcal{M}(D)\in S] \le e^{\epsilon}\cdot Pr[\mathcal{M}(D')\in S]+\delta
\end{equation}
\end{definition}

In the application of differential privacy to machine learning, adjacent inputs refer to two datasets that differ by one training sample.
The randomized mechanism $\mathcal{M}$ is a training algorithm.
The natural interpretation for the parameters $\epsilon$ and $\delta$ are as follows: $\epsilon$ represents the upper limit on the privacy loss, and $\delta$ represents the probability that the privacy guarantee may not hold.

R\'enyi Differential Privacy (RDP) \cite{Mironov17} generalizes pure differential privacy ($\delta=0$), with the following advantages.
Firstly, it has nice composition property. Secondly, it provides a cleaner way to capture the privacy guarantees of Gaussian noise used for $(\epsilon,\delta)$-differential privacy.   
The RDP mechanism is defined based on R\'enyi divergence, as stated below:

\begin{definition}
\noindent\textbf{(R\'enyi Divergence)}. The R\'enyi divergence between two distributions P and Q, with order  $\alpha$ ($\alpha$ $>$ 0 and $\alpha$ $\ne$ 1), is defined as:
\begin{equation}
D_\alpha(P \parallel Q) \triangleq \frac{1}{\alpha - 1} log \mathbb{E}_{x\backsim Q} \left[(\frac{P(x)}{Q(x)})^\alpha\right].  
\end{equation}
\end{definition}

\begin{definition}
\noindent\textbf{(R\'enyi Differential Privacy)}. A randomized mechanism $\mathcal{M}$ guarantees ($\alpha,\epsilon$)-RDP with $\alpha \ge$ 1 if for any neighboring datasets D and D$'$, 
\begin{small}
\begin{align}
D_\alpha(\mathcal{M}&(D)\!\parallel\! \mathcal{M}(D'))\!=\\ &\frac{1}{\alpha-1}log\mathbb{E}_{x\backsim \mathcal{M}(D)}\!\left[(\frac{Pr(\mathcal{M}(D)=x)}{Pr(\mathcal{M}(D')=x)})^{\alpha-1}\right]\le\!\epsilon.   
\end{align}
\end{small}
\end{definition}

\begin{theorem}
\label{thm:RDP-composition}
If a mechanism $\mathcal{M}$ consists of a sequence of adaptive mechanisms $\mathcal{M}_1$, . . . , $\mathcal{M}_k$ such that for any i $\in$ $[k]$, $\mathcal{M}_i$ guarantees($\alpha$, $\epsilon_i$)-RDP, then $\mathcal{M}$ guarantees ($\alpha$, $\sum_{i=1}^k$$\epsilon_i$)-RDP.
\label{th1}
\end{theorem}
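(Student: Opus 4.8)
The plan is to prove the RDP composition theorem (Theorem~\ref{thm:RDP-composition}) directly from the definition of R\'enyi divergence, proceeding by induction on the number of mechanisms $k$. The base case $k=1$ is immediate. For the inductive step it suffices to handle $k=2$, i.e., to show that if $\mathcal{M}_1$ guarantees $(\alpha,\epsilon_1)$-RDP and $\mathcal{M}_2$ guarantees $(\alpha,\epsilon_2)$-RDP (where $\mathcal{M}_2$ may be chosen adaptively after seeing the output of $\mathcal{M}_1$), then the composed mechanism $\mathcal{M}(D) = (\mathcal{M}_1(D), \mathcal{M}_2(D))$ guarantees $(\alpha,\epsilon_1+\epsilon_2)$-RDP; the general case then follows by associativity of composition, grouping $\mathcal{M}_1,\dots,\mathcal{M}_{k-1}$ together and applying the two-mechanism result.

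For the $k=2$ case, fix neighboring datasets $D$ and $D'$. First I would write the joint density of the composed output at a point $(x_1,x_2)$ as a product using the chain rule for probabilities: $\Pr[\mathcal{M}(D)=(x_1,x_2)] = \Pr[\mathcal{M}_1(D)=x_1]\cdot \Pr[\mathcal{M}_2(D)=x_2 \mid \mathcal{M}_1(D)=x_1]$, and similarly for $D'$. Then I would expand the quantity $\exp\big((\alpha-1)D_\alpha(\mathcal{M}(D)\parallel \mathcal{M}(D'))\big) = \mathbb{E}_{(x_1,x_2)\sim\mathcal{M}(D')}\big[(\Pr[\mathcal{M}(D)=(x_1,x_2)]/\Pr[\mathcal{M}(D')=(x_1,x_2)])^{\alpha}\big]$, substitute the product form, and factor the integrand into the ratio for the first coordinate times the (conditional) ratio for the second coordinate, each raised to the power $\alpha$. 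The key manipulation is to take the expectation over $x_2$ first (conditioned on $x_1$), which by the RDP guarantee for $\mathcal{M}_2$ is bounded by $e^{(\alpha-1)\epsilon_2}$ uniformly in $x_1$; pulling this bound out leaves an expectation over $x_1$ that, by the RDP guarantee for $\mathcal{M}_1$, is bounded by $e^{(\alpha-1)\epsilon_1}$. Multiplying gives $e^{(\alpha-1)(\epsilon_1+\epsilon_2)}$; taking logarithms, dividing by $\alpha-1$, and noting $\alpha>1$ yields $D_\alpha(\mathcal{M}(D)\parallel\mathcal{M}(D'))\le \epsilon_1+\epsilon_2$, as required.

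The main obstacle is the adaptivity of $\mathcal{M}_2$: because the second mechanism is allowed to depend on the realized output $x_1$ of the first, the uniform-in-$x_1$ bound $\sup_{x_1}\mathbb{E}_{x_2\sim\mathcal{M}_2(D)\mid x_1}[(\cdot)^\alpha]\le e^{(\alpha-1)\epsilon_2}$ must be argued carefully, using that the $(\alpha,\epsilon_2)$-RDP guarantee holds for every instantiation of the parameter that selects $\mathcal{M}_2$. I would also need to be a little careful about whether the divergences are finite (so that exponentiating and taking logs is legitimate) and about measure-theoretic technicalities if the ranges are continuous, but these are routine and I would state them briefly rather than belabor them. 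A fully rigorous treatment simply mirrors the argument of Mironov~\cite{Mironov17}, so I would cite that source for the details while presenting the factorization argument above as the conceptual core.
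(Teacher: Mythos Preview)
Your argument is correct and is essentially the standard proof of RDP composition due to Mironov~\cite{Mironov17}: induction reduces to the two-mechanism case, the joint density factors via the chain rule, and the inner conditional expectation is bounded uniformly in $x_1$ using the RDP guarantee of the adaptively chosen $\mathcal{M}_2$, leaving the outer expectation to be handled by the RDP guarantee of $\mathcal{M}_1$.

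However, there is nothing in the paper to compare this against. Theorem~\ref{thm:RDP-composition} appears in the Preliminaries section as a quoted background result and is stated without proof; the paper simply relies on it (together with Theorem~\ref{thm:RDP-to-DP}) later when deriving the overall $(\epsilon,\delta)$ guarantee of SEDML. So your proposal does not differ from the paper's approach so much as it supplies a proof where the paper gives none. If your goal is fidelity to the paper, the appropriate treatment is to state the theorem and cite~\cite{Mironov17} rather than to reproduce the argument; if your goal is a self-contained write-up, what you have is fine and your remark about handling adaptivity via the uniform-in-$x_1$ bound is exactly the point that needs care.
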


\begin{theorem}
\label{thm:RDP-to-DP}
\noindent\textbf{(From RDP to DP)}. If a mechanism $\mathcal{M}$ guarantees ($\alpha$, $\epsilon$)-RDP, then $\mathcal{M}$ guarantees ($\epsilon +\frac{ log 1/\delta}{\alpha-1}, \delta$)-differential privacy for any $\delta$ $\in$ (0, 1).
\label{th2}
\end{theorem}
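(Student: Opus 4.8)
The plan is to carry out the standard ``tail-bound'' conversion from a R\'enyi-divergence bound to an $(\epsilon',\delta)$-guarantee with $\epsilon' = \epsilon + \frac{\log(1/\delta)}{\alpha-1}$; throughout I take $\alpha > 1$, as the statement implicitly requires for $\frac{\log(1/\delta)}{\alpha-1}$ to make sense. Write $P = \mathcal{M}(D)$ and $Q = \mathcal{M}(D')$ for the output distributions on an arbitrary pair of neighboring datasets. Unwinding the RDP hypothesis through the definition of R\'enyi divergence gives $\mathbb{E}_{x \sim Q}\left[\left(\frac{P(x)}{Q(x)}\right)^{\alpha}\right] \le e^{(\alpha-1)\epsilon}$ (equivalently, the paper's $P$-expectation form $\mathbb{E}_{x\sim P}\left[\left(\frac{P(x)}{Q(x)}\right)^{\alpha-1}\right] \le e^{(\alpha-1)\epsilon}$, which is the same quantity). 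Fix an arbitrary measurable $S \subseteq \mathcal{R}$; the goal is $\Pr[P \in S] \le e^{\epsilon'}\Pr[Q \in S] + \delta$.

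First I would split $S$ according to the pointwise density ratio: let $S_1 = \{x \in S : P(x) > e^{\epsilon'} Q(x)\}$ and $S_2 = S \setminus S_1$. On $S_2$ the ratio is at most $e^{\epsilon'}$ by construction, so $\Pr[P \in S_2] = \int_{S_2} P(x)\,dx \le e^{\epsilon'} \int_{S_2} Q(x)\,dx \le e^{\epsilon'}\Pr[Q \in S]$; this supplies the multiplicative term, and it remains to show $\Pr[P \in S_1] \le \delta$.

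The key step is the bound on the ``bad'' set $S_1$. There the ratio exceeds $e^{\epsilon'}$, so since $\alpha - 1 > 0$ we may dominate the indicator by $1 < \left(e^{-\epsilon'}\frac{P(x)}{Q(x)}\right)^{\alpha-1}$ pointwise on $S_1$. Hence
\[
\Pr[P \in S_1] = \int_{S_1} P(x)\,dx < e^{-(\alpha-1)\epsilon'}\int_{S_1} P(x)\left(\frac{P(x)}{Q(x)}\right)^{\alpha-1} dx = e^{-(\alpha-1)\epsilon'}\int_{S_1} Q(x)\left(\frac{P(x)}{Q(x)}\right)^{\alpha} dx .
\]
Enlarging the last integral to all of $\mathcal{R}$ and applying the RDP moment bound gives $\Pr[P \in S_1] < e^{-(\alpha-1)\epsilon'} \cdot e^{(\alpha-1)\epsilon} = e^{(\alpha-1)(\epsilon-\epsilon')}$. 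Substituting $\epsilon' = \epsilon + \frac{\log(1/\delta)}{\alpha-1}$ turns the exponent into $\log\delta$, so $\Pr[P\in S_1] < \delta$. Adding the two estimates yields $\Pr[P\in S] \le e^{\epsilon'}\Pr[Q\in S] + \delta$, and since the neighboring pair and $S$ were arbitrary, $\mathcal{M}$ is $(\epsilon',\delta)$-differentially private.

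I expect the $S_1$ estimate to be the only real obstacle: the trick is recognizing that the indicator of the set where the privacy loss is large can be bounded by a power of the normalized density ratio — and this works precisely because $\alpha > 1$ — so that the $\alpha$-th moment that RDP controls is exactly what appears. Everything else (the $S_2$ bound, the final substitution for $\epsilon'$, reducing to the worst-case neighboring pair) is routine bookkeeping. A minor point to handle carefully is the discrete-versus-continuous phrasing of the density ratio; I would state the argument with integrals (sums in the discrete case) against a common dominating measure so that $P(x)/Q(x)$ is well defined.
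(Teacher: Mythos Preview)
Your argument is correct and is exactly the standard tail-bound conversion due to Mironov. Note, however, that the paper does not actually prove this theorem: it is stated in the preliminaries as a background result imported from \cite{Mironov17}, so there is no in-paper proof to compare against. Your write-up would serve perfectly well as the omitted proof.
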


\section{Problem Statement}\label{sec:problem}

% \subsection{Design}
\subsection{System Architecture}

\begin{figure}[t!]
\centerline{\includegraphics[width=0.46\textwidth]{./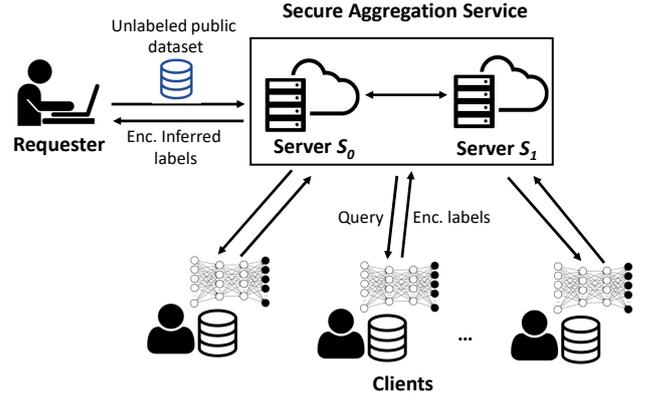}}
\caption{The system architecture.}
\label{fig:system_architecture}
\end{figure}

Fig. \ref{fig:system_architecture} illustrates the system architecture of SEDML that is aimed at securely and efficiently harnessing distributed knowledge in machine learning.
At the core, there are three parties: the requester, clients, and the secure aggregation service provider.
The requester wants to collect labels for a unlabeled public dataset via harnessing the collective knowledge of the clients, and then trains a model called student model.
Each client holds a proprietary model, namely teacher model, which is trained on private datasets locally.
On one hand, each client is interested in contributing knowledge to the training of the requester's student model via providing label predictions for the training examples in the public dataset through a teacher model trained over its local private dataset.
On the other hand, each client also has privacy concerns regarding the label predictions provided for the public dataset as they may reveal information about its teacher model and thus the private dataset on which the teacher model has been trained.
Hence, each client would only be willing to provide encrypted label predictions, and demand that security mechanisms should be put in place to safeguard their data privacy.

The secure aggregation service is a platform that bridg$\-$es the requester and the clients. It could be deployed on the cloud given the well-known advantages like scalability, ubiquitous access, and economical cost.
Similar to prior work \cite{XiangWWL20}, we consider that the secure aggregation service is jointly run by two cloud servers which are hosted by independent cloud providers.
We note that such a two-server model has recently gained increasing traction in both academic work \cite{RiaziWTS0K18,0002SKG19} and industrial sectors \cite{tfencrypted,crypten2020}.

In our system, the secure aggregation service receives label predictions in encrypted form from the clients, performs aggregation in the encrypted domain, and produces encrypted deferentially private aggregate label predictions for the public dataset, which are then returned to the requester on demand.

\subsection{Threat Assumptions and Security Goals}

In SEDML, we consider threats primarily come from the two cloud servers providing the knowledge aggregation service, under the commonly assumed semi-honest adversary model.
In particular, each cloud server will faithfully follows the protocol specifications of SEDML, yet may attempt to infer private sensitive information beyond their access rights, based on the messages received from the protocol execution.
Here, following the state-of-the-art \cite{XiangWWL20} as well as other works \cite{RiaziWTS0K18,0002SKG19,tfencrypted,crypten2020}, we assume the two cloud servers from different trust domains are non-colluding.
The rationale behind such non-collusion assumption is that cloud providers are business-driven parties and usually well-established companies, so they have least incentives to risk their reputations by acting maliciously.
With respect to the above threat model, our system aims to provide two following security guarantees: 

\begin{enumerate}

\item \textbf{Confidentiality for individual label predictions.} The label predictions from individual clients are kept confidential throughout the service flow.

\item \textbf{Differential privacy for individual clients.} The aggregated label predictions revealed to the requester should be differentially private so that inferring private information about individual clients from the aggregated label predictions is thwarted.

\end{enumerate}

\section{The Design of SEDML}\label{sec:sedml}

\subsection{Design Rationale}
To harness the distributed knowledge while being priva$\-$cy-friendly, SEDML is aimed at securely aggregating the label predictions collected from the teacher models held by a set of clients, so that a student model can be trained by the requester based on the training examples with aggregated labels.

We start with an overview of the aggregation mechanism (without considering differential privacy), which follows the plaintext-domain PATE framework \cite{papernot2018scalable} and the state-of-the-art security design \cite{XiangWWL20}.
Without loss of generality, we describe the aggregation of label predictions for one data sample $x$ in the unlabeled public dataset for the sake of simplicity.

Suppose there are $K$ clients, each of which holds a teacher model.
We use ${\bf y}_j$ ($j\in\{1,...,K\}$) to denote the label prediction from the $j$-th teacher model for the training example $x$ in the unlabeled public dataset. 

The label prediction $\mathbf{y}_j$ from teacher model $j$ is an $N$-dimensional binary vector, given that there are $N$ classes in total. 
If the predicted class is the $i$-th class, the $i$-th element in the vector $\mathbf{y}_j$ --- denoted by $\mathbf{y}_j(i)$ --- is 1, and all other elements are $0$. 

We denote the vote count for the $i$-th class as $n_i$, which is computed as $n_i=\sum\nolimits_j{\mathbf{y}_j(i)}$.
According to \cite{papernot2018scalable,XiangWWL20}, the aggregation of the label predictions, without considering differential privacy, works as follows.
Firstly, the vote count $n_i$ for each class $i$ is computed, followed by the computation of the highest vote count $n^*$, i.e., $n^*=\max(n_1,\cdots,n_N)$. 
The highest vote count $n^*$ is then compared to a threshold $T$.
If $n^*\ge T$, which means there is a consensus among the teacher models, the class $i^*$ corresponding to $n^*$ is output as the aggregated label for the training example $x$.
Otherwise, a termination symbol $\perp$ is returned.
Therefore, only the training examples with a consensus-reached aggregated class label will be used in training the student model.

\noindent\textbf{Challenges.}
Although the above aggregation mechanism has no direct access to clients' local models and datasets, the label predictions from the teacher models can still pose a great threat to data privacy \cite{li2020label,erdogan2021unsplit}, which is overlooked in the PATE framework. 
% Besides, aggregator and other users may also pose a threat to private data.
Therefore, the aggregation mechanism should be performed while keeping the label predictions collected from the teacher models confidential.
In addition, for the training examples which have a class whose highest votes are greater than the threshold, the aggregated label should be produced in the encrypted form as well and delivered to the requester on demand.
For other training examples, they should be discarded as no consensus is reached among the teacher models.

That is, it is expected that throughout the whole workflow, the aggregation service \textit{only learns whether there is a consensus among the teacher models given a training example in the dataset, and nothing beyond.}

From the above aggregation procedure, it is noted that the aggregation of the label predictions from the teacher models requires the atomic operations of addition and comparison.
For securing the aggregation process, one may considering the use of homomorphic encryption as taken by the state-of-the-art work~\cite{XiangWWL20}.
However, homomorphic encryption is expensive and incurs significant performance overheads.

\noindent\textbf{Our Approach.} In SEDML, to ensure security while ensuring high efficiency, we resort to the lightweight technique of additive secret sharing for data encryption and processing, in contrast to the expensive homomorphic encryption used in \cite{XiangWWL20}. 
Despite that we note that addition in the additive secret sharing domain can be directly supported, as shown in the preliminaries (Section.~\ref{sec:secretsharing}). There is a dearth of efficiently supporting secure comparison in the additive secret sharing domain.
Our observation is that secure comparison of two values $x$ and $y$ in the additive secret sharing domain can be realized via securely extracting the most significant bit
of the subtraction result $x-y$ between two values in the ring $\mathbb{Z}_{2^l}$ \cite{ZhengDW19,LiuZYY21}. 
We further observe that the MSB extraction can be ingeniously achieved via implementing a full adder logic in the secret sharing domain.

Inspired by prior work \cite{LiuZYY21}, we take advantage of the carry look-ahead adder for realizing secure and efficient comparison in SEDML, considering the fact that it obviates the cumbersome sequential carry computation and thus consumes much less number of rounds, in contrast to the standard ripple carry adder.
It follows two general steps as below.
\begin{enumerate}
    \item Firstly, a carry generate signal $G_i$ and a carry propagate signal $P_i$ are defined, which can be computed instantly based on the input bits $\{a_i\}$ and $\{b_i\}$, i.e., $G_i=a_i\cdot b_i$ and $P_i=a_i+b_i$.
    
\item Secondly, the carry bit computation can be formulated as $c_{i+1}=G_i+P_i\cdot c_i$.
Such formulation allows a carry to be computed without waiting for the carry to ripple through all previous phases.
Let us take a 4-bit carry look-ahead adder as an example.
We have $c_4=G_3+P_3\cdot c_3=G_3+P_3\cdot (G_2+P_2\cdot (G_1+P_1\cdot G_1))$.
\end{enumerate}

With the above formulation, the MSB of a secret $l$-bit value with shares $\{a_i\}^{l-1}_0$ and $\{b_i\}^{l-1}_0$ in bitwise form can be securely obtained via computing  $a_{l-1}+b_{l-1}+c_{l-1}$ in the secret sharing domain.
Hence, given such secure MSB extraction, we are able to achieve efficient secure comparison in the secret sharing domain, as opposed to the prior design \cite{XiangWWL20} that relies on the expensive homomorphic encryption.
However, there is another subtle challenge to be addressed specific to the secure label aggregation procedure.
In particular, during the comparison procedure, the relationship between the vote counts should not be revealed. 

To solve this issue, we propose to have the following efficient secure aggregation design.

Given the secret sharings $[a]$ and $[b]$ of two values $a$ and $b$, we first get the secret-shared comparison result $[e]$ underlying which the plaintext value $e$ is either $0$ ($a\ge b$) or $1$ ($a<b$).
Then, to achieve oblivious selection of the greater value, we compute the following: $[d] = [1-e]\cdot [a]+[e]\cdot [b]$, which is the secret sharing of the greater value. If $e$ = 1, $d$ = $b$; otherwise $e$ = $a$.
In this manner, we can hide the relationship among the vote counts throughout the comparison procedure in secure aggregation of individual label predictions.

Given the secret sharing $[n^*]$ of the greatest vote count $n^*$, we can then perform a secure comparison with the threshold $T$ to indicate whether there is a consensus among the teacher models.
To this end, the only information revealed to the cloud servers is whether there is consensus among the teacher models for each training example, and nothing beyond.
To ensure differential privacy on the aggregated labels, we further follow \cite{papernot2018scalable,XiangWWL20} and properly add Gaussian noises in the secure aggregation procedure. 

\subsection{The Secure Comparison Gadget} 

%%%%%%%%%%%%%%%%%%%%%%%%%%%%%%%

\begin{figure}[t!]
\centerline{\includegraphics[width=0.45\textwidth]{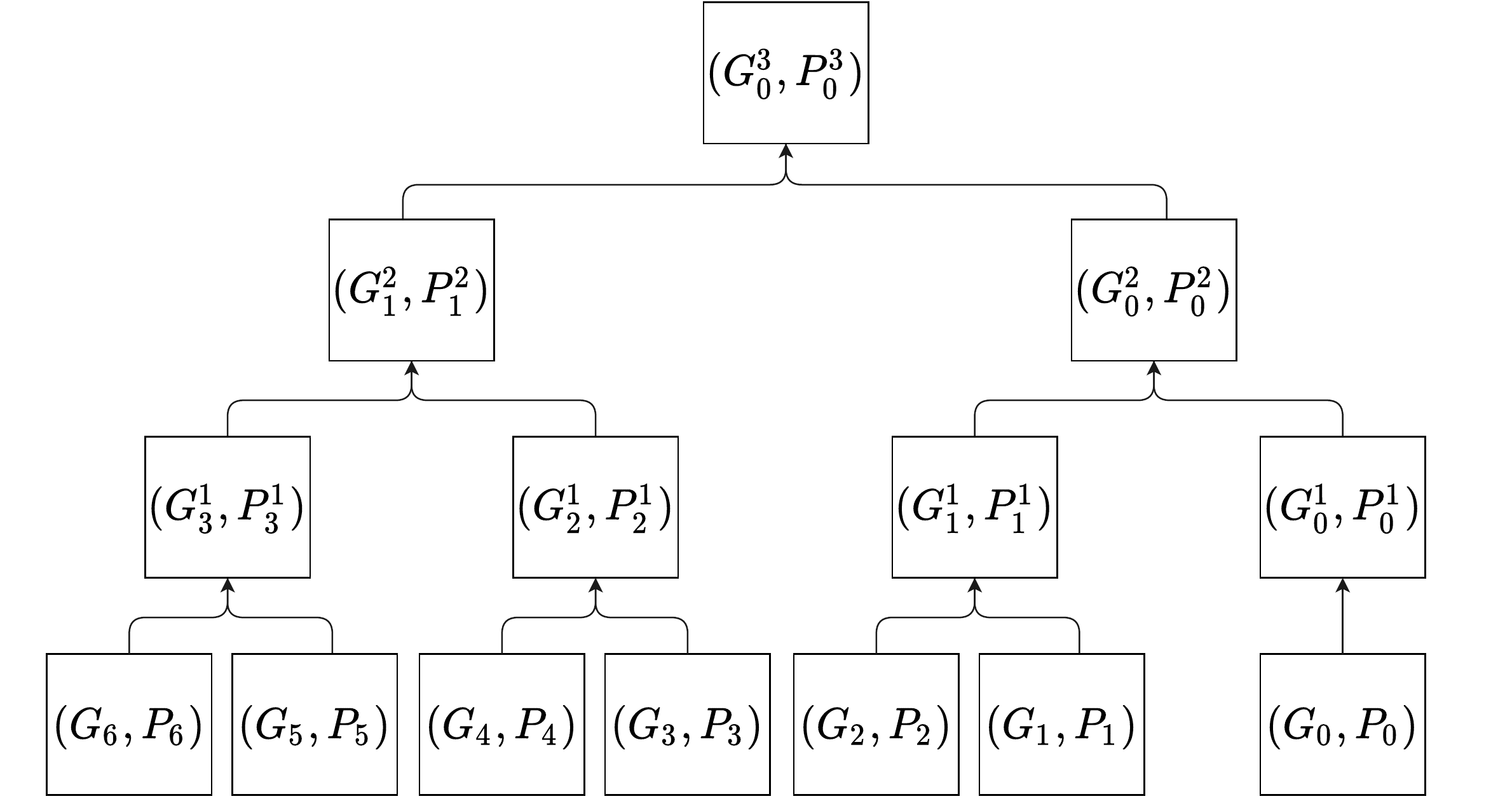}}
\caption{Example of carry calculation over 8-bit inputs with the carry look-ahead adder.}
\label{fig:carryCalculation}
\end{figure}

Before elaborating on the SEDML protocol, we introduce the secure comparison gadget based on MSB extraction in the secret sharing domain, as mentioned above.
The secure comparison gadget takes as input the secret sharings of two values $a$ and $b$, and outputs the secret sharing $[e]$ of the comparison result $e$.
We note that the whole computation procedure for the carry look-ahead adder can be organized in the form of a binary tree, where the bottom layer consists of the signals $G$ and $P$ corresponding to the input bits.
As an example, Fig. \ref{fig:carryCalculation} illustrates the computation for the case of an $8$-bit adder, where $G^3_0$ refers to the desired carry bit for the MSB computation.
Let us define an operator $\circ$ to be used during the computation.
As illustrated in Fig. \ref{fig:binaryOperator}, with $(G^*,P^*)=(G'',P'')\circ(G',P')$, we have $G^*=G''+G'P''$ and $P^*=P'P''$.
Let $\llbracket \cdot \rrbracket$ denote secret sharing in the ring $\mathbb{Z}_2$, as opposed to secret sharing $[\cdot]$ in the ring $\mathbb{Z}_{2^l}$.
Given the pre-generated multiplication triples in $\mathbb{Z}_2$ and $\mathbb{Z}_{2^l}$, the gadget $\mathsf{SCMP}_{ss}([a],[b])\rightarrow [e]$ proceeds as the follows:

\begin{figure}[t!]
\centerline{\includegraphics[width=0.25\textwidth]{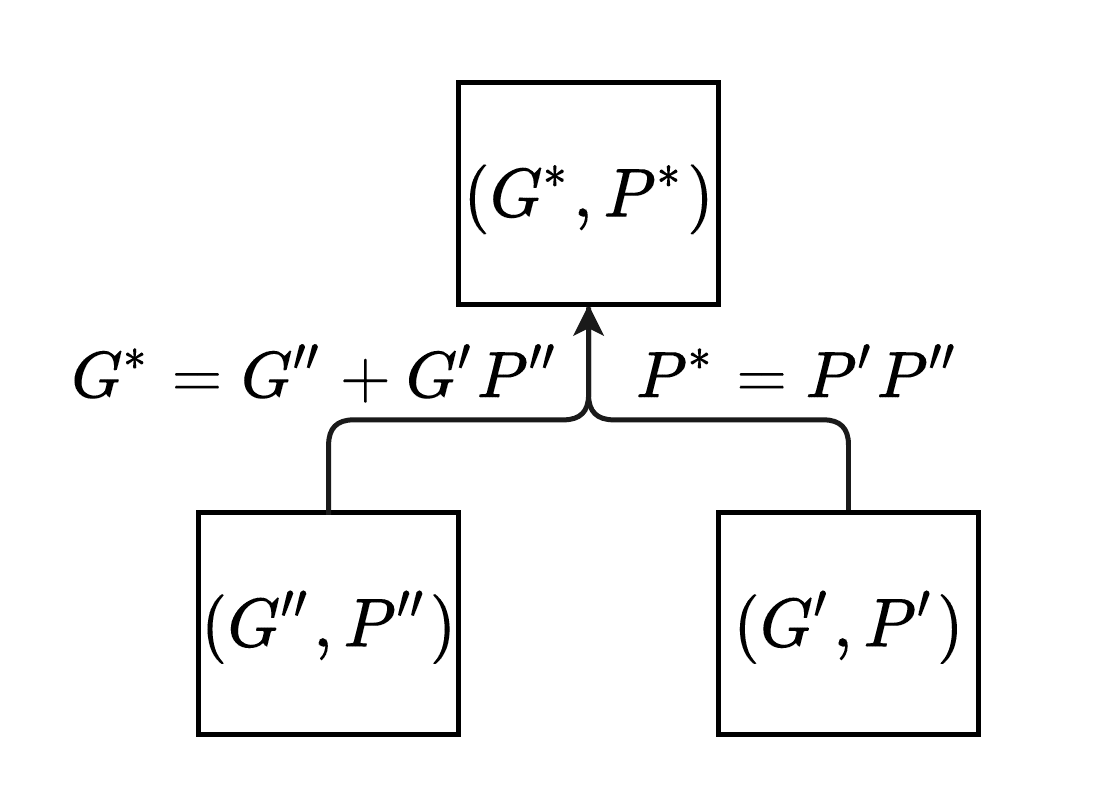}}
\caption{Binary operator for carry calculation.}
\label{fig:binaryOperator}
\end{figure}

\begin{enumerate}
\item Each cloud server $S_i$ computes $[f]=[a]-[b]$.
\item Let $x_{l-1},\cdots,x_0$ denote the bits for the share $[f]_0$ and $y_{l-1},\cdots,y_0$ for the bits of the share $[f]_1$. Also, for $j\in[0,l-1]$, $S_0$ sets $\llbracket x_j \rrbracket_0=x_j$ and $\llbracket y_j \rrbracket_0=0$; and $S_1$ sets $\llbracket x_j \rrbracket_1=0$ and $\llbracket y_j\rrbracket_1=y_j$. For $j\in[0,l-1]$, $S_0$ sets $\llbracket d_j\rrbracket_0=x_j$, and $S_1$ sets $\llbracket d_j\rrbracket_1=y_j$.

\item $S_0$ and $S_1$ compute $\llbracket G_j \rrbracket$=$\llbracket x_j \rrbracket$$\cdot$$\llbracket y_j \rrbracket$ and $\llbracket P_j \rrbracket$=$\llbracket x_j \rrbracket$$+$$\llbracket y_j \rrbracket$, for $j\in[0,l-1]$.

\item $S_0$ and $S_1$ set $(\llbracket G^1_0 \rrbracket,\llbracket P^1_0 \rrbracket)=(\llbracket G_0 \rrbracket,\llbracket P_0 \rrbracket)$.

\item $S_0$ and $S_1$ proceed through the following rounds to securely compute the MSB, i.e., the secure comparison result.

(a) In round $t=1$, for $k\in [1,l/2-1]$, $S_0$ and $S_1$ compute $(\llbracket G^1_k \rrbracket,\llbracket P^1_k \rrbracket)=(\llbracket G_{2k} \rrbracket,\llbracket P_{2k} \rrbracket)\circ (\llbracket G_{2k-1} \rrbracket,\llbracket P_{2k-1} \rrbracket)$. 
    
(b) In each round $t\in [2,\log l-1]$, for $k\in[0,l/2^t-1]$, $S_0$ and $S_1$ compute $(\llbracket G^t_k \rrbracket,\llbracket P^t_k \rrbracket)=(\llbracket G^{t-1}_{2k+1} \rrbracket,\llbracket P^{t-1}_{2k+1} \rrbracket)\circ (\llbracket G^{t-1}_{2k} \rrbracket,\llbracket P^{t-1}_{2k} \rrbracket)$. 

(c) In round $t=\log l$, $S_0$ and $S_1$ compute $\llbracket G^{t}_0 \rrbracket=\llbracket G^{t-1}_1 \rrbracket +   \llbracket G^{t-1}_0 \rrbracket  \cdot \llbracket P^{t-1}_1 \rrbracket=\llbracket c_{l-1} \rrbracket$.

(d) $S_0$ and $S_1$ compute $ \llbracket e \rrbracket=\llbracket d_{l-1} \rrbracket  +\llbracket c_{l-1} \rrbracket$.

    \item $S_0$ and $S_1$ convert $\llbracket e \rrbracket$ in $\mathbb{Z}_2$ to $\mathbb{Z}_{2^l}$ as follows. $S_0$ sets 
    $[p_1]_0=\llbracket e \rrbracket_0$ and $[p_2]_0=0$, and $S_1$ sets $[p_1]_1=0$ and $[p_2]_1=\llbracket e \rrbracket_1$. Then, $S_0$ and $S_1$ compute $[e]=[p_1]+[p_2]-2[p_1][p_2]$.
    
\end{enumerate}

From the above, we can see that the secure comparison gadget takes $O(\log l)$ communication rounds.
Meanwhile, the procedure is fully conducted in the secret sharing domain with efficient arithmetic operations, with secret-shar$\-$ed inputs and output.

\subsection{The SEDML Protocol}

We now present the complete SEDML protocol that allows to securely and efficiently harness distributed knowledge in machine learning. 
It builds on additive secret sharing to perform secure aggregation of the individual label predictions for training examples in the public dataset provided by the requester.
It also provides strong assurance of differential privacy for clients engaged in the service.
The complete SEDML protocol is shown in Algorithm \ref{alg:sedml-protocol}, which is introduced below.

Given a training example $x$, each client $j$ produces a label prediction encoded as a binary vector $\mathbf{y}_j$, as introduced above.
For privacy protection, client $j$ encrypts the vector $\mathbf{y}_j$ under additive secret sharing.
In particular, client $j$ generates a vector $\mathbf{r}$ of random values sampled from $\mathbb{Z}_{2^l}$, and generates the shares $[\mathbf{y}_j]_0=\mathbf{r}$ and $[\mathbf{y}_j]_1=\mathbf{y}_j-\mathbf{r}$ through element-wise computation in the ring $\mathbb{Z}_{2^l}$.
Client $j$ then sends the share $[\mathbf{y}_j]_0$ to cloud server $S_0$ and the share $[\mathbf{y}_j]_1$ to cloud server $S_1$ respectively.
Upon receiving the secret shares $[\mathbf{y}]$ of the label predictions from the clients for a training example, the cloud servers perform aggregation over the secret shares to produce an aggregated label for the training example if there is a consensus among the teacher models, or terminate on that example otherwise.

The secure aggregation procedure works as follows. Fir$\-$stly, leveraging the additive property of secret sharing, the cloud servers sum up the secret-shared label prediction vectors $\{[\mathbf{y}_j]\}^K_{j=1}$ and produce $[\mathbf{n}]= \sum\nolimits_j {[\mathbf{y_j}]}$, which corresponds to the secret sharing of the votes for the classes.
Then, the cloud servers need to obtain the encrypted highest vote count $n^*$ among the votes.
By invoking the secure comparison gadget, the cloud servers can securely compare a pair of elements $[\mathbf{n}(p)]$ and $[\mathbf{n}(q)]$ in the secret-shared vector $[\mathbf{n}]$.
That is, we have $\mathsf{SCMP}_{ss}([\mathbf{n}(p)],[\mathbf{n}(q)])\rightarrow [e]$, where $[e]$ indicates the comparison result.
Note that the secret-shared triples needed in the secure comparison gadget can be pre-generated offline and distributed to the two cloud servers by the requester.  
To obtain the secret sharing of the greater element, the cloud servers compute $[d] = [1-e]\cdot [\mathbf{n}(p)]+[e]\cdot [\mathbf{n}(q)]$.
Applying such secure comparison procedure over the votes in the secret-shared vector $\mathbf{n}$, the cloud servers can produce the secret-shared highest vote count $[n^*]$.

To check whether there is a consensus among the teach$\-$er models for the training example $x$, the cloud servers proceed as follows.
Firstly, the cloud server $S_0$ adds a Gaussian noise $g\leftarrow \mathcal{N}(0, \sigma^2_1)$ to its share $[n^*]_0$, which leads to that the cloud servers now hold the secret sharing of the noisy highest vote count, i.e., $[n^*+g]$.
Here, $\mathcal{N}(0, \sigma^2_1)$ means that the Gaussian distribution with mean 0 and variance $\sigma^2_1$. 
Note that addition of Gaussian noise is due to the demand for differential privacy.
Then, the cloud servers invoke the secure comparison gadget which takes as input the secret sharings of the noisy highest vote $[n^*+g]$ and the threshold $[T]$.
That is, we have $\mathsf{SCMP}_{ss}([n^*+g],[T])\rightarrow [t]$.
The cloud servers then reconstruct $t$ by exchanging the shares of $t$.
If $t=1$,  we have $n^*+g < T$, so there is no consensus among the teacher models and the cloud servers terminate on the training example x. In such case, the training example $x$ is discarded and will not be used by the requester when training the student model.
If $t=0$, we have $n^*+g \ge T$, so there is a consensus among the teacher models.

The cloud servers now proceed to produce the secret-shared aggregated label for $x$. Firstly, the cloud server $S_0$ adds a Gaussian noise to each element of the vector $\mathbf{n}$ in the secret sharing domain, producing a secret-shared noisy vector $\mathbf{m}$.
In particular, for each element $\mathbf{n}(i)$, the cloud server $S_0$ samples a noise $g_i$ from the Gaussian distribution $\mathcal{N}(0,\sigma^2_2)$ and computes $[\mathbf{m}(i)]_0=[\mathbf{n}(i)]_0+g_i$. The cloud server $S_1$ sets $[\mathbf{m}(i)]_1=[\mathbf{n}(i)]_1$.
In such way, the secret sharing of the vector of noisy vote counts is generated.
The cloud servers then invoke the secure comparison gadget over the vector $[\mathbf{m}]$.
Here, it is noted that in the end the cloud servers need to identify the index $i^*$ of the greatest value in the vector $\mathbf{m}$ after the secure comparison procedure.
Therefore, while securely comparing two elements $\mathbf{m}(p)$ and $\mathbf{m}(q)$ of the vector $\mathbf{m}$ in the secret sharing domain, the cloud servers generate the secret sharing of the index $s$ of the greater value among them.
In particular, given that $\mathsf{SCMP}_{ss}([\mathbf{m}(p)],[\mathbf{m}(q)])\rightarrow [z]$, the cloud servers compute $[s] = [1-z]\cdot p+[z]\cdot q$.
It is easy to see that if $\mathbf{m}(p)\ge \mathbf{m}(q)$, we have $z=0$, so $s=p$; and otherwise $s=q$.
Applying such comparison procedure, the cloud servers finally obtain the secret-shared index $[i^*]$ of the highest noisy vote count in $\mathbf{m}$, which corresponds to the aggregated label for the training example $x$.
This secret sharing can be delivered to the requester on demand, from which the requester can recover the aggregated label $i^*=[i^*]_0+[i^*]_1$ for the training example $x$ and use it in training the student model.

\noindent\textbf{Remarks.}
It is noted that directly comparing the vote counts sequentially requires $O(m)$ rounds of interactions among the cloud servers. Although this is already a linear increased complexity, it may still be a bottleneck, especially when the system runs in high-latency networks. 
To counter this, specific interaction reductions can be applied in the process of secure comparison of the elements of the secret vector for further efficacy optimization.
In particular, we can partition the vote counts into groups with a size $2$.
Then, secure comparison can performed for the two values within each group in parallel, meaning that the communication can be batched.
The (secret-shared) greater values from the secure comparison in each group form new groups for the next round of computation. 
In the end, the secret-shared greatest vote count is produced.

Our SEDML protocol fully runs in the secret sharing domain without heavy cryptography, as opposed to the state-of-the-art design \cite{XiangWWL20} that relies on expensive homomorphic encryption.
We also note that the design of \cite{XiangWWL20} needs multi-round communication among the clients and the cloud servers, while the clients in SEDML can just go offline after submitting their encrypted label predictions.
Furthermore, we note that the design of \cite{XiangWWL20} requires the number of participating clients to be determined and fixed in the beginning. 
All clients are required to participate subsequently, and their design will fail even if one of the clients fails to participate and submit ciphertexts.
Our SEDML protocol is free of such practical restriction.

\begin{algorithm}[!t]
\caption{The Proposed SEDML Protocol}
\label{alg:sedml-protocol}
\begin{algorithmic}[1]
\REQUIRE  Individual label prediction vectors $\{\mathbf{y}_j\}$.

\ENSURE The aggregated label $i^*$ if there is a consensus among the teacher models, or $\perp$ otherwise.

\underline{Client:} // \textit{Encrypt the prediction vector.}

\FOR{\textbf{each} client $j$}

%  \STATE Produce the label prediction vector $\mathbf{y}_j$ of $x$.
 
 \STATE Generate a vector of random values $\mathbf{r}$ $\in$ $\mathbb{Z}_{2^l}$ and set the secret shares as $[\mathbf{y}_j]_0=\mathbf{r}$ and $[\mathbf{y}_j]_1=\mathbf{y}_j-\mathbf{r}$.

 \STATE Send the share $[\mathbf{y}_j]_0$ to cloud server $S_0$ and the share $[\mathbf{y}_j]_1$ to cloud server $S_1$ respectively. 
\ENDFOR

\underline{Cloud servers $S_0$ and $S_1$:} // \textit{Phase 1: Secure Highest Vote Identification}

\STATE Sum up $\{[\mathbf{y}_j]\}^K_{j=1}$ and produce $[\mathbf{n}]= \sum\nolimits_j {[\mathbf{y_j}]}$.
\STATE $[n^*]=[\mathbf{n}(0)]$.
  \FOR{($i=1; i<N; i++$)}
    \STATE $\mathsf{SCMP}_{ss}([n^*],[\mathbf{n}(i)])\rightarrow [e]$.
    \STATE $[n^*] = [1-e]\cdot [n^*]+[e]\cdot [\mathbf{n}(i)]$.
  \ENDFOR

\STATE Produce the secret-shared highest vote count $[n^*]$.

\underline{Cloud servers $S_0$ and $S_1$:}// \textit{Phase 2: Secure Threshold Check.} 

\STATE $S_0$ adds $g\leftarrow \mathcal{N}(0, \sigma^2_1)$ to its share $[n^*]_0$ and produces $[n^*+g]_0=[n^*]_0+g$.

\STATE $S_0$ sets $[n^*+g]_1=[n^*]_1$.

 \STATE 
Invoke $\mathsf{SCMP}_{ss}([n^*+g],[T])\rightarrow [t]$ and reconstruct $t$;

\IF{$t=1$} \STATE return $\perp$;

\ELSE \STATE go to next phase;
 
\ENDIF

\underline{Cloud servers $S_0$ and $S_1$:} // \textit{Phase 3: Secure Consensus Label Identification $[i^*]$.}

  \FOR{($i=0; i<N; i++$)}
    \STATE $S_0$ samples $g_i\leftarrow \mathcal{N}(0, \sigma^2_2)$.
    \STATE $S_0$ computes $[\mathbf{m}(i)]_0=[\mathbf{n}(i)]_0+g_i$.
    \STATE $S_1$ sets $[\mathbf{m}(i)]_1=[\mathbf{n}(i)]_1$.
  \ENDFOR

\STATE $[m^*]=[\mathbf{m}(0)]$.
\STATE $[s]$=$[0]$
\FOR{($i=1; i<N; i++$)}
 \STATE $\mathsf{SCMP}_{ss}([m^*],[\mathbf{m}(i)])\rightarrow [e]$.
      \STATE $[m^*] = [1-e]\cdot [m^*]+[e]\cdot [\mathbf{m}(i)]$.
          \STATE $[s] = [1-e]\cdot [s]+[e]\cdot [i]$.
\ENDFOR

\STATE Set $[i^*]=[s]$ and send $[i^*]$ to the requester upon request.

\underline{Requester:} // \textit{Reconstruct the aggregated label $i^*$.}
\STATE $i^*=[i^*]_0+[i^*]_1$.
\end{algorithmic}
\end{algorithm}

\subsection{Security Analysis}

The SEDML protocol provides assurance on confidentiality of the individual label predictions from clients, as well as differential privacy guarantees for clients.
In particular, throughout the secure aggregation procedure, the cloud servers only learn whether there is a consensus among the teacher models for a training example, without learning the individual label predictions.
Furthermore, the aggregation results provide differential privacy guarantees, which prevents information leakage by inference on the aggregation results.
As the confidentiality is ensured by the use of cryptographic techniques, we prove such guarantee following the standard simulation-based paradigm.
We start with giving the ideal functionality.

\begin{definition}

The ideal functionality $\mathcal{F}$ of securely harnessing distributed knowledge in SEDML is modeled as follows.
Given a training example $x$, each client $j$ provides a label prediction vector $\mathbf{y}_j$ to $\mathcal{F}$.
The requester and the two cloud servers input nothing to $\mathcal{F}$.
Upon receiving $\{\mathbf{y}_j\}^{K}_{j=1}$ from the clients, $\mathcal{F}$ conducts aggregation. If there is a consensus among the teacher models, $\mathcal{F}$ outputs an aggregated label to the requester. Otherwise, $\mathcal{F}$ returns nothing. 
\end{definition}

\begin{definition}
\label{def_security}
A protocol $\Pi$ securely realizes $\mathcal{F}$ if it provides the following guarantees. We require that a corrupted and semi-honest cloud server $S_i$ ($i\in\{0,1\}$) leans no information about individual label predictions and the aggregated label.
Formally, a PPT simulator should $\mathsf{Sim}_{S_i}$ should exist and generate a simulated view $\mathsf{View}_{\mathsf{Sim}_{S_i}}$ for $S_i$ such that  $\mathsf{View}_{\mathsf{Sim}_{S_i}}$ is indistinguishable to the view $\mathsf{View}^{\Pi}_{S_i}$ of $S_i$ in the real protocol execution, i.e., $\mathsf{View}^{\Pi}_{S_i} \mathop  \approx \limits^c \mathsf{View}_{\mathsf{Sim}_{S_i}}$.
\end{definition}

\begin{theorem}
Our SEDML protocol securely realizes the functionality $\mathcal{F}$ according to Definition \ref{def_security}, given that the two cloud servers are semi-honest adversaries and non-colluding.
\end{theorem}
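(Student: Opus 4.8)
The plan is to follow the standard simulation-based paradigm for semi-honest security in the two-server setting, constructing a PPT simulator $\mathsf{Sim}_{S_i}$ for each corrupted cloud server $S_i$ ($i \in \{0,1\}$) and arguing that its output is computationally indistinguishable from the real view $\mathsf{View}^{\Pi}_{S_i}$. First I would enumerate exactly what $S_i$ receives during a protocol run: (i) one additive share $[\mathbf{y}_j]_i$ of each client's label prediction vector; (ii) the pre-generated multiplication-triple shares in $\mathbb{Z}_2$ and $\mathbb{Z}_{2^l}$ supplied by the requester; (iii) the messages exchanged inside every invocation of $\mathsf{SCMP}_{ss}$ — namely the opened values $e,f$ in each Beaver multiplication and the bit/ring share conversions; and (iv) the reconstructed bit $t$ from the threshold check in Phase 2. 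The key observation is that each share $[\mathbf{y}_j]_i$, each triple share, and each masked value $e = \alpha - t_1$, $f = \beta - t_2$ revealed during a Beaver multiplication is, by the one-time-pad property of additive secret sharing (and the independence/uniformity of the triple components), uniformly distributed in its ring and independent of the underlying secrets.

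The core of the argument is therefore an inductive/hybrid claim: the simulator replaces every incoming share and every opened Beaver value with a freshly sampled uniform ring element, and I would argue by a sequence of hybrids (one per opened value, in protocol order) that this substitution is perfectly indistinguishable, since in the real execution each such value is masked by an independent fresh random term that $S_i$ never otherwise sees (the non-collusion assumption is what guarantees $S_i$ does not also hold the complementary share $[\cdot]_{1-i}$, so the mask is never unblinded from $S_i$'s side). The only genuinely non-uniform message is the reconstructed threshold bit $t$ in Phase 2: here $\mathsf{Sim}_{S_i}$ cannot sample it blindly, so it obtains $t$ from the ideal functionality's leakage — $t$ is exactly the "whether there is a consensus" bit, which the security definition explicitly permits the servers to learn — and then sets the other server's share of $t$ so that reconstruction yields the correct bit. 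For the Phase-3 output $[i^*]_i$ delivered to the requester, the simulator simply samples it uniformly, which is faithful because $S_i$'s share of the aggregated label carries no information. Assembling these hybrids gives $\mathsf{View}^{\Pi}_{S_i} \approx \mathsf{View}_{\mathsf{Sim}_{S_i}}$; I would also invoke Theorem~\ref{thm:RDP-composition} and Theorem~\ref{thm:RDP-to-DP} to package the composition of the Gaussian-noise additions (variances $\sigma_1^2$ in Phase 2 and $\sigma_2^2$ in Phase 3) into the claimed $(\epsilon,\delta)$-differential-privacy guarantee on the label released to the requester, noting that the noise is added inside the secret-sharing domain so correctness of the DP calibration is unaffected by the cryptographic layer.

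I would structure the write-up as two cases, $S_0$ and $S_1$, since they are slightly asymmetric — $S_0$ is the party that injects the Gaussian noise, so its view additionally contains the sampled noise values $g, g_i$, which the simulator can generate on its own from the publicly known distributions $\mathcal{N}(0,\sigma_1^2)$ and $\mathcal{N}(0,\sigma_2^2)$ and which are independent of everything else — whereas $S_1$'s view is "purely cryptographic." The main obstacle, and the place where the proof must be handled carefully rather than waved through, is the conversion steps inside $\mathsf{SCMP}_{ss}$: the bit-to-ring conversion in step~6 of the gadget and the $\mathbb{Z}_2$ multiplications in steps~3–5 each open intermediate values, and I must check that the composition of the $\mathsf{SCMP}_{ss}$ sub-simulator with the outer protocol simulator is sound — i.e., that no opened value inside one gadget invocation becomes correlated, from $S_i$'s viewpoint, with an opened value in a later invocation or with the final reconstructed $t$. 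This reduces to confirming that fresh, independent triples are used for every multiplication (so masks are never reused) and that the $d_{l-1}$-bit and carry bits feeding the final $\llbracket e \rrbracket$ are never individually opened — only the aggregate bit $t$ in Phase 2 is reconstructed — so the leakage profile is precisely the consensus bit and nothing more, matching $\mathcal{F}$.
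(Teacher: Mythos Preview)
Your proposal is essentially the same simulation-based argument the paper gives: simulate incoming shares and Beaver-opened values by fresh uniform ring elements, invoke the simulator for triple-based multiplication inside each $\mathsf{SCMP}_{ss}$ call, and patch the honest party's share of $t$ so that the reconstructed consensus bit matches. Two small remarks: (i) the paper actually claims the two servers' roles are symmetric and treats only $S_0$, whereas you correctly note the mild asymmetry from $S_0$ sampling the Gaussian noise---your handling is more careful here; (ii) the differential-privacy accounting via Theorems~\ref{thm:RDP-composition} and~\ref{thm:RDP-to-DP} is packaged in the paper as a separate theorem, not as part of this one, so you should split that paragraph off rather than fold it into the simulation proof.
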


\begin{proof}
According to our security definitions, we need to show the existence of a simulator for either of the cloud servers.
In the SEDML protocol, the roles of the two cloud servers are symmetric, so it is sufficient to show a simulator $\mathsf{Sim}_{S_0}$ for the cloud server $S_0$.
Recall that the cloud server $S_0$ receives secret shares of label predictions in the very beginning and then works over the secret shares throughout the whole secure aggregation procedure, with interactions with the other cloud server $S_1$.
Attributing to the security of additive secret sharing, the secret shares received by $S_0$ are uniformly random and can be easily simulated by the simulator $\mathsf{Sim}_{S_0}$ using random values.

During the computation of secure aggregation, the interactions among the cloud servers are to securely compare the (noisy) votes, based on the secure comparison gadget $\mathsf{SCMP}$.
According to the construction of $\mathsf{SCMP}$, it takes as input secret-shared values and outputs secret-shared values as well, and the inner processing is secure addition and secure multiplication based on standard Beaver's triples in the secret sharing domain.
Assume the simulator for the standard triple-based secure multiplication is $\mathsf{Sim}^B$.
The simulator $\mathsf{Sim}_{S_0}$ can invoke $\mathsf{Sim}^B$ on random values for each interaction with the cloud server $S_0$.
The security of Beaver's triple trick ensures that the view simulated by $\mathsf{Sim}^B$ is indistinguishable from the view of the cloud server $S_0$ in every secure multiplication in the real execution.
The simulator $\mathsf{Sim}_{S_0}$ combines in order the view simulated by $\mathsf{Sim}^B$ on every secure multiplication, which are then used as its simulated view for the secure comparison gadget.
Recall that during the computation, there is a secure comparison step where the cloud servers securely compare the secret-shard highest (noisy) vote with a threshold to see if there is a consensus among the teacher models, and the result is revealed to them, i.e., the comparison result $t$.
For this step, $\mathsf{Sim}_{S_0}$ adjusts the honest server's share of $t$ such that the recovered value is indeed the consensus-checking result $t$.
This concludes the simulation in our SEDML protocol.
\end{proof}

Following prior works \cite{papernot2018scalable,XiangWWL20}, our SEDML protocol also adds differential privacy noises so as to prevent information leakage from the aggregated results.
We have the following theorem regarding the differential privacy guarantee.

\begin{theorem}

The SEDML protocol provides $(\epsilon,\delta)$-differen\-tial privacy, where $\epsilon=\sqrt {2(9/\sigma^2_1 + 2/\sigma^2_2)\log 1/\delta}  + (9/2\sigma^2_1 + 1/\sigma^2_2)$ and $\delta\in (0,1)$.
\end{theorem}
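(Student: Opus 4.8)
The plan is to derive the $(\epsilon,\delta)$-differential privacy guarantee by accounting for the total privacy cost of the two noisy releases in the protocol, using the R\'enyi differential privacy (RDP) machinery recalled in Theorems~\ref{thm:RDP-composition} and~\ref{thm:RDP-to-DP}. There are exactly two points in Algorithm~\ref{alg:sedml-protocol} where noise is injected and information subsequently leaves the secure computation: (i) the Gaussian noise $g \leftarrow \mathcal{N}(0,\sigma_1^2)$ added to the highest vote count $n^*$ before the threshold check, whose (binary) outcome $t$ is revealed; and (ii) the Gaussian noise vector with per-coordinate variance $\sigma_2^2$ added to $\mathbf{n}$ before the argmax that yields the released label $i^*$. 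First I would argue that changing one training sample in a client's local dataset changes at most one teacher's prediction, hence changes each vote count $n_i$ by at most $1$, so the $\ell_2$-sensitivity of the vector $\mathbf{n}$ is bounded; more carefully, since the prediction vector $\mathbf{y}_j$ is one-hot, flipping one teacher's vote moves mass $1$ out of one coordinate and $1$ into another, giving $\ell_2$-sensitivity $\sqrt{2}$ for $\mathbf{n}$ and sensitivity $1$ for the scalar $n^*$.

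Next I would invoke the standard RDP bound for the Gaussian mechanism: releasing a function with $\ell_2$-sensitivity $\Delta$ under $\mathcal{N}(0,\sigma^2)$ noise satisfies $(\alpha, \alpha\Delta^2/(2\sigma^2))$-RDP for every order $\alpha>1$. Applying this to step (i) with $\Delta=1$ and variance $\sigma_1^2$ — and noting that the post-processing of the noisy $n^*$ into the single bit $t$ only decreases the privacy cost — gives $(\alpha, \alpha/(2\sigma_1^2))$-RDP; the factor $9$ appearing in the theorem statement suggests the authors instead bound the sensitivity of $n^*$ by $3$ (e.g., accounting for the reported-then-argmax interaction or a coarser one-hot bound), so I would track this constant as $\Delta_1^2 = 9$, yielding $(\alpha, 9\alpha/(2\sigma_1^2))$-RDP. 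Applying the Gaussian RDP bound to step (ii) with $\ell_2$-sensitivity $\Delta_2^2 = 2$ and variance $\sigma_2^2$, and using the fact that taking the argmax of the noisy vector is again post-processing, gives $(\alpha, \alpha/\sigma_2^2)$-RDP. By the composition theorem (Theorem~\ref{thm:RDP-composition}), the overall mechanism is $(\alpha, \alpha(9/(2\sigma_1^2) + 1/\sigma_2^2))$-RDP, i.e. $(\alpha, \alpha \cdot c)$-RDP with $c = 9/(2\sigma_1^2) + 1/\sigma_2^2$.

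Finally I would convert back to $(\epsilon,\delta)$-DP via Theorem~\ref{thm:RDP-to-DP}: an $(\alpha, \alpha c)$-RDP mechanism is $\bigl(\alpha c + \tfrac{\log(1/\delta)}{\alpha-1},\, \delta\bigr)$-DP for every $\alpha>1$ and $\delta\in(0,1)$. Optimizing the bound $\alpha c + \log(1/\delta)/(\alpha-1)$ over $\alpha>1$ — the usual calculus step, minimizing at $\alpha - 1 = \sqrt{\log(1/\delta)/c}$ — yields $\epsilon = c + 2\sqrt{c\log(1/\delta)}$. Substituting $c = 9/(2\sigma_1^2) + 1/\sigma_2^2$ and rewriting $2\sqrt{c\log(1/\delta)} = \sqrt{2(9/\sigma_1^2 + 2/\sigma_2^2)\log(1/\delta)}$ recovers exactly the claimed expression $\epsilon = \sqrt{2(9/\sigma_1^2 + 2/\sigma_2^2)\log(1/\delta)} + (9/(2\sigma_1^2) + 1/\sigma_2^2)$.

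I expect the main obstacle to be pinning down the sensitivity constants precisely — in particular justifying the factor $9$ for the $n^*$ release (whether it comes from a $3\times$ sensitivity bound on the reported-max statistic, from composing the threshold check with the later argmax, or from a looser one-hot argument) and confirming that the $\sqrt{2}$ sensitivity of $\mathbf{n}$ is what feeds the second term; once the per-step RDP parameters are fixed, composition, the RDP-to-DP conversion, and the optimization over $\alpha$ are entirely routine. A secondary subtlety worth a sentence is that the DP guarantee must be argued to hold against the requester (who sees $t$ and $i^*$) even though the intermediate values stay secret-shared — this is immediate because what the requester learns is a post-processing of the two noisy releases, and DP is closed under post-processing.
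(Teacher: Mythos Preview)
Your high-level strategy --- bound the RDP cost of each noisy release, compose via Theorem~\ref{thm:RDP-composition}, convert via Theorem~\ref{thm:RDP-to-DP}, then optimize over $\alpha$ --- is exactly what the paper does. Your treatment of the second release (report-noisy-max viewed as the Gaussian mechanism on the vote vector with $\ell_2$-sensitivity $\sqrt{2}$, followed by the argmax as post-processing) recovers the $(\alpha,\alpha/\sigma_2^2)$-RDP bound the paper uses, and your explicit optimization over $\alpha$, yielding $\epsilon=c+2\sqrt{c\log(1/\delta)}$ at $\alpha-1=\sqrt{\log(1/\delta)/c}$, spells out a step the paper simply defers to its reference~\cite{XiangWWL20}.

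The one genuine gap is your handling of the first release. You correctly notice that a plain Gaussian-mechanism analysis of $n^*$ with sensitivity $1$ gives only $(\alpha,\alpha/2\sigma_1^2)$-RDP and cannot account for the factor $9$, but your proposed explanations (a sensitivity-$3$ bound on $n^*$, a coarser one-hot argument, or interaction with the later argmax) are not how the constant arises. The paper does not treat steps 5--19 as a single Gaussian release at all; it identifies this phase as an instance of the \emph{sparse vector technique} and imports the bound $(\alpha,\,9\alpha/2\sigma_1^2)$-RDP for that technique directly from~\cite{XiangWWL20}. So the $9$ is inherited from the SVT-RDP analysis in that prior work, not from any sensitivity argument on $n^*$. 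Once you replace your first-step analysis by that citation, the remainder of your argument goes through verbatim and matches the paper's proof.
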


\begin{proof}
As our SEDML protocol applies differential privacy in the same way as the prior work \cite{XiangWWL20}, the proof is similar to \cite{XiangWWL20}. So we only give the main points here and omit the details.
Specifically, the differential privacy mechanisms involved in the SEDML protocol consists of the sparse vector technique and the report-noisy-maximum technique, which correspond to steps 5 to 19 and steps 20 to 32 in Algorithm. \ref{alg:sedml-protocol} respectively.
According to \cite{XiangWWL20}, the sparse vector technique satisfies $(\alpha,9\alpha/2\sigma^2_1)$-RDP, and the report-noisy-maximum technique satisfies $(\alpha,\alpha/\sigma^2_2)$-RDP.
So given the composition property of RDP in Theorem \ref{thm:RDP-composition}, the SEDML protocol satisfies $(\alpha,9\alpha/2\sigma^2_1+\alpha/\sigma^2_2)$-RDP.
By Theorem \ref{thm:RDP-to-DP}, we have $(\epsilon,\delta)$-differential privacy for the SEDML protocol, where $\epsilon\ge  \sqrt {2(9/\sigma^2_1 + 2/\sigma^2_2)\log 1/\delta} + (9/2\sigma^2_1 + 1/\sigma^2_2)$ \cite{XiangWWL20}.
\end{proof}

\section{Experiments}\label{sec:experiment}
\subsection{Setup}

To validate the performance of our SEDML design, two popular datasets consisting of SVHN and MNIST are used for comprehensive experiments. All these two datasets have been used in closely-related works \cite{papernot2018scalable,XiangWWL20}.

The SVHN (street view house number) dateset contains images of house numbers as seen from Google Street View images~\cite{netzer2011reading}. Each image contains a set of Arabic numbers from `0' to `9'. Each colorful image sample has a size of $32 \times 32 \times 3$. The training set has 73,257 images, and the testing set contains 26,032 images and 531,131 additional images---all the training and additional images are used for training teacher models. In other words, we use 604,388 samples as teachers' training samples. These samples are evenly distributed according to the number of teachers to ensure that the training samples of each teacher model do not overlap.
A certain number of samples from the 26032 test samples are reserved as the unlabeled public dataset, and the rest is used as the test samples of the student model. The MNIST dataset consists of handwritten digital pictures. There are 10 categories of pictures, corresponding to 10 Arabic numbers from `0' to `9'~\cite{lecun1998gradient}. The numbers of training and testing image samples are 60,000 and 10,000, respectively. Each gray image sample has a size of $28\times 28 \times 1$. Similar to SVHN, we use 60000 samples as teachers' training samples, which are evenly distributed. A certain number of samples from the 10000 test samples are reserved as the unlabeled public dataset, and the rest is used to test the student model accuracy.
Our experiments use the same convolutional neural networks as in the prior work \cite{papernot2018scalable}.

The cryptographic operations in SEDML only work with integers.
However, the original voting counts will become floating-point numbers after Gaussian noises for differential privacy are added. Therefore, we need  to convert floating-point numbers into decimal integers when doing secure comparison in ciphertext domain. In our SEDML, the maximum length of the number to be compared is 32 bits, and the maximum number of votes is 250 (given 250 teacher models), and the decimal bit corresponding to 32 bits is $10^{9}$. Therefore, for a floating-point number, we multiply it by $10^{7}$ and truncate it. Only the integer part is taken.
Note that two numbers participating in the comparison will be multiplied with the same magnitude to retain the same expansion.
Our SEDML protocol is implemented in Python.
We run the experiments on a server with an AMD Ryzen 5 4600H CPU, 16GB RAM and the Windows 10 operating system.

\begin{table}[t!]
    \centering
    \caption{Computation Cost of Different Phases on the Cloud}
    \begin{tabular}{c|c}
    \hline
    Phase & 
 \begin{tabular}{@{}c@{}} Running time (s)\end{tabular}\\
    \hline
     Highest Vote Identification & 23.249
 \\ \hline
     Threshold Check & 2.5119
 \\ \hline
     Consensus Label Identification & 21.575
\\ \hline
    Overall& 47.3359
 \\
    \hline
    \end{tabular}
    
    \label{tab:COMPUTATIONAL COSTS}
\end{table}

\begin{table}[t!]
    \centering
    \caption{Communication Cost of Different Phases on the Cloud}
    \begin{tabular}{c|c}
    \hline
    Phase &  \begin{tabular}{@{}c@{}} Cost (KB)\end{tabular}\\
    \hline
      Highest Vote Identification & 29628 \\ \hline
     Threshold Check & 3292 \\ \hline
      Consensus Label Identification & 28201 \\ \hline
     Overall & 61121 \\ \hline
    \end{tabular}
    
    \label{tab:COMMUNICATION COSTS}
\end{table}

\subsection{Performance Evaluation}
We firstly stick with the SVHN dataset for comprehensive evaluations on both computation and communication performance.

\subsubsection{Computation Performance}

The computation cost of the secure aggregation procedure in SEDML consists of the following components in three phases.

\vspace{2pt}\noindent$\bullet$ {\bf Phase 1: Secure Highest Vote Identification:} (line 5 to 11). In the first phase, SEDML needs to securely compare the voting results for each pair of classes. We use the secure comparison algorithm to find the class with the highest vote. So we specifically evaluate the computation time of these operations. 

\vspace{2pt}\noindent$\bullet$ {\bf Phase 2: Secure Threshold Check:} (line 12 to 19). In the second phase, after the Gaussian noise with variance $\sigma_1^2$ is added to the highest vote, secure threshold check is utilized to determine whether the noisy maximum vote is greater than the pre-determined threshold. We evaluate the computation time for such secure threshold check.

\vspace{2pt}\noindent$\bullet$ {\bf Phase 3: Secure Consensus Label Identification:} (line 20 to 32). Once the threshold check is passed, we enter the third phase of SEDML. To be precise, Gaussian noise with variance $\sigma_2^2$ is added to the votes of all classes, and then the maximum value is determined with the secure comparison algorithm among all the votes. 

In our experiments, the number of samples used to query teacher models is 1000. In other words, the unlabeled dataset size held by the requester is 1000. To avoid variance in the result due to single round, we have performed 10 test runs and present the averaged cost.
Table.~\ref{tab:COMPUTATIONAL COSTS} reports the computation cost in each phase, for secure aggregation for 1000 samples. We can see that the running time of the secure highest vote identification phase is close to that of the secure consensus label identification phase.
The latter is a bit smaller due to the fact that some samples do not pass the secure threshold check and they will not be involved in the third phase. 

As for the running time of the secure threshold check phase, it is only 2.5119 seconds, as only one secure comparison operation is needed.

\subsubsection{Communication Performance}
Regarding the communication performance,  we examine and report the sizes of messages transmitted among the cloud servers, which are independent of computing platforms.
The results are summarized in Table. \ref{tab:COMMUNICATION COSTS}. 
The size of messages transmitted in the secure consensus label identification is again slightly lower than that of secure highest vote identification phase.
This is because some samples are filtered out in the intermediate secure threshold check phase.
Overall, the data size communicated in the first and third phase is about 9 times of that in the second phase.
Because for each sample, only one security comparison algorithm needs to be used in phase 2, while nine security comparisons are required in phase 1 and phase 3. Therefore, the sizes of messages transmitted in phase 1 and phase 3 is 9 times that in phase 2.

\subsubsection{Scalability}
We further investigate the scalability of the proposed SEDML from two aspects: the running time as a function of the number of samples, and the running time as a function of the number of classes. 
It should be noted that here we report the total running time of the whole secure aggregation procedure.

We first fix the number of classes to 10.
The left subfigure in Fig. \ref{fig:time} details the computational time when the number of samples varies from 1000 to 5000. We can see that the running time scales linearly with the number of samples, ranging from $47.336$s to $231.720$s. 

Then we fix the number of samples at $1000$ while varying the number of classes. We set the number of classes as 10, 20, 30, 40, and 50, respectively.
For this scalablity experiment, we use synthetic data as we are measuring the running time whose complexity depends on the number of classes.
The right subfigure in Fig. \ref{fig:time} shows the running time as the  number of classes varies. We can see that the running time also increases \textit{linearly} with the increase of the number of classes. 
This is because the total running time is dominated by the secure comparison operations. In the first and third phase, the complexity of required secure comparison operations is $O(n)$, while the second phase only needs a one-time secure comparison. For instance, when the number of classes is 10, 9 secure comparisons are required in the first phase; and when the number of classes is 20, 19 secure comparisons are required in the first phase. Therefore, the running time in SEDML as a function of the number of classes is also  approximately $O(n)$ that is desirably linear.

\subsection{Accuracy Evaluation}
We perform accuracy evaluation from the following aspects:

\vspace{2pt}\noindent$\bullet$ {\bf Label Accuracy:} It refers to the percentage of correctly labeled samples in the public dataset sent to clients.

\vspace{2pt}\noindent$\bullet$ {\bf Student Model Accuracy:} It is simply the testing accuracy of the student model trained with samples that receive consensus labels through the secure aggregation procedure.

As described in Algorithm \ref{alg:sedml-protocol}, a queried sample will be discarded if the teacher models cannot reach a consensus on the label through a threshold check.
We set the threshold to $60\%$ in the following experiments unless otherwise stated.
Apparently, the label accuracy is directly related to the threshold setting. If a high threshold or a low threshold is used, the label accuracy is expected to decrease. We will give a detailed description later.
%%%%%%%%%%%%%%%%%%%%%%%%%

\begin{figure*}[t!]
    \centering
    \includegraphics[width=1\textwidth]{./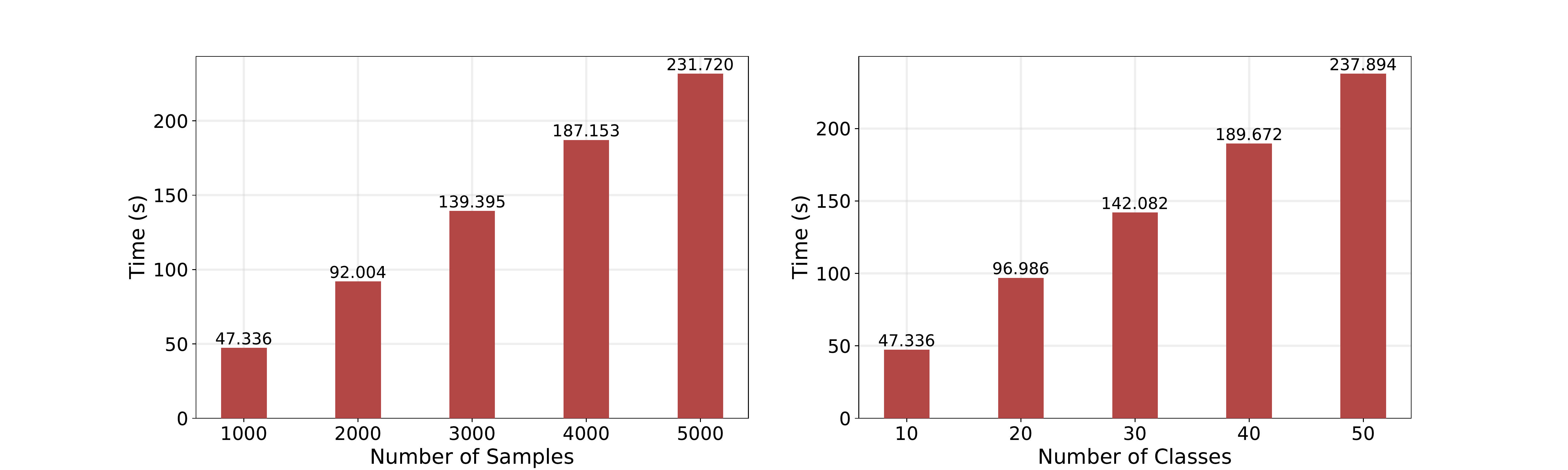}
    \caption{The scalability of SEDML with varying number of samples (left) and varying number of classes (right).}
    \label{fig:time}
\end{figure*}

\begin{figure*}[t!]
    \centering
    \includegraphics[width=1\textwidth]{./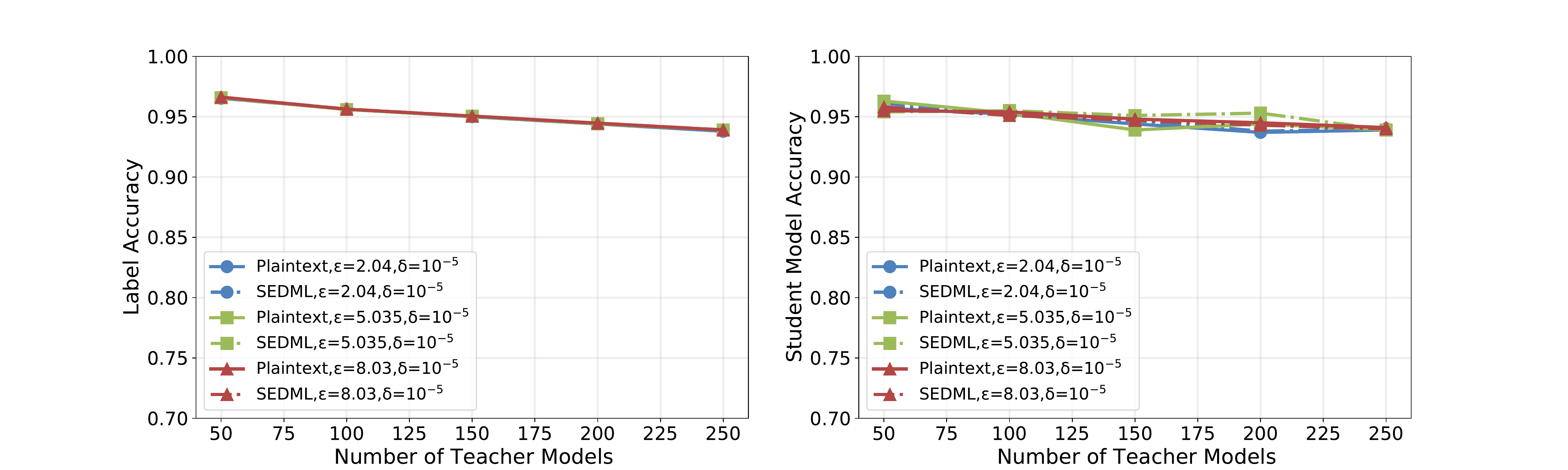}
    \caption{Accuracy evaluation results on MNIST. }
    \label{fig:mnist accuracy}
\end{figure*}

\begin{figure*}[t!]
    \centering
    \includegraphics[width=1\textwidth]{./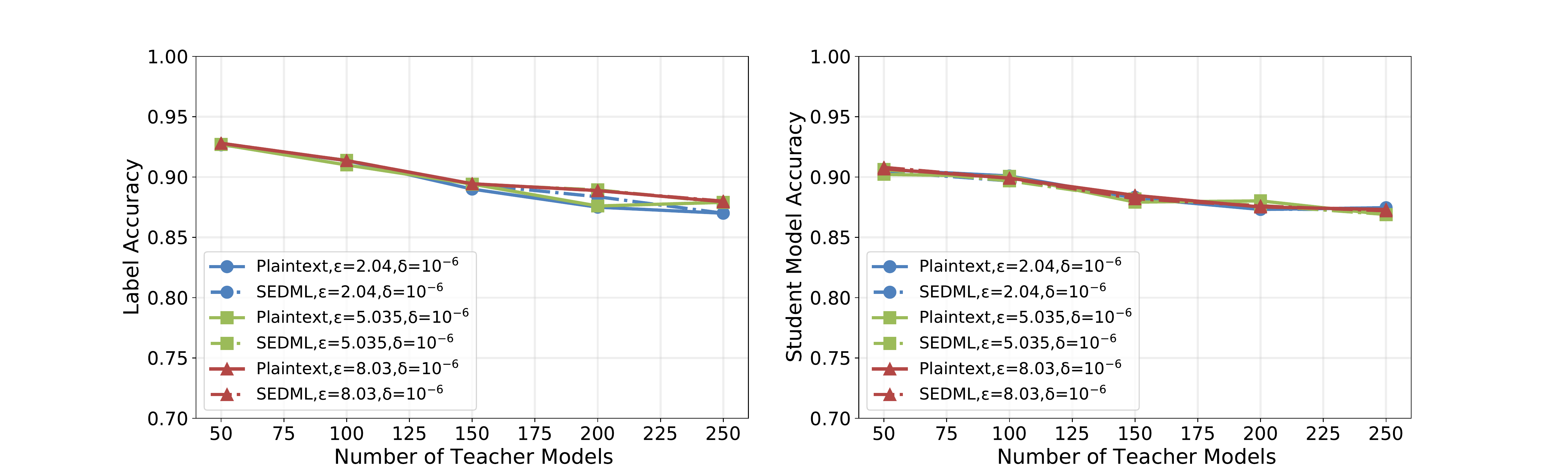}
    \caption{Accuracy evaluation results on SVHN. }
    \label{fig:svhn accuracy}
\end{figure*}

\begin{figure*}[t!]
    \centering
    \includegraphics[width=1\textwidth]{./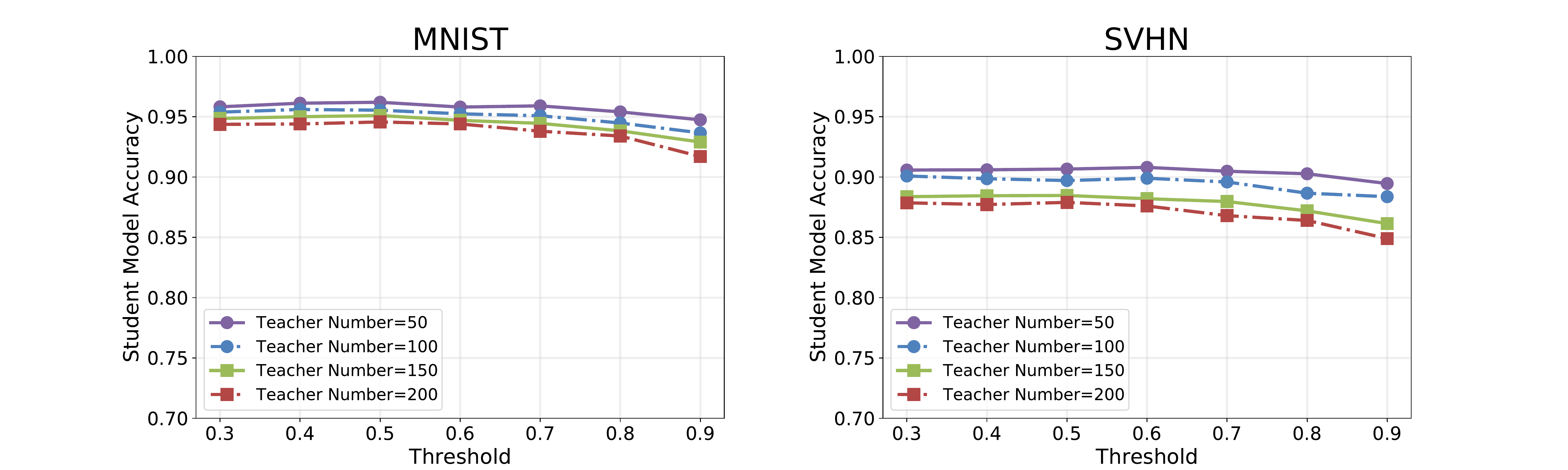}
    \caption{The student model accuracy with varying thresholds and number of teacher models. }
    \label{fig:threshold}
\end{figure*}

\begin{figure*}[t!]
    \centering
    \includegraphics[width=1\textwidth]{./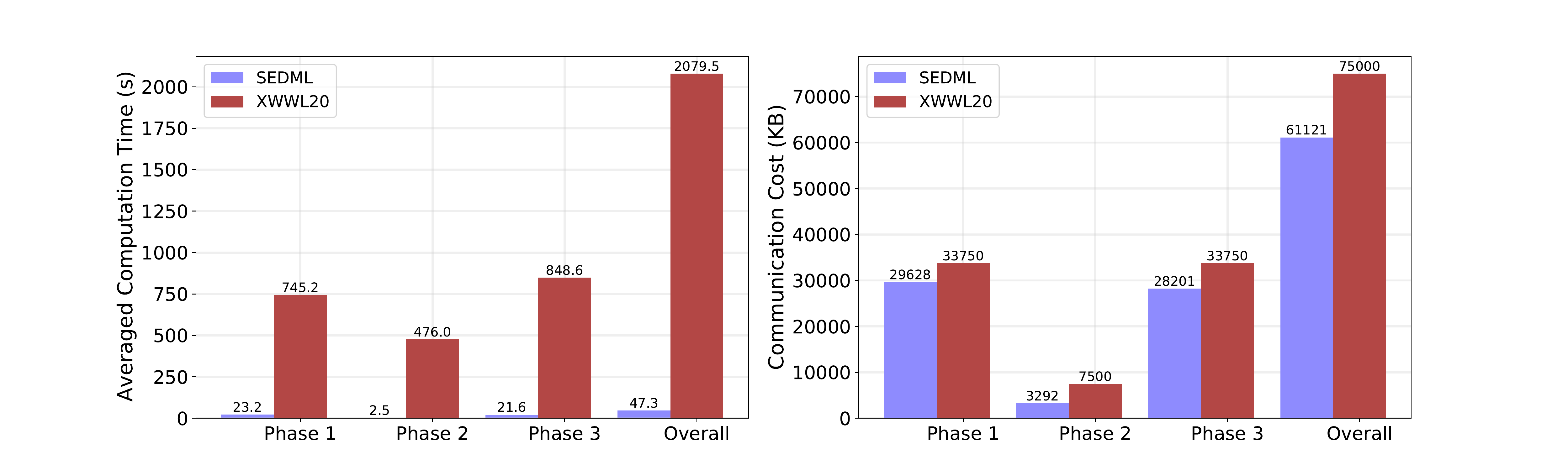}
    \caption{Performance comparison with the prior work XWWL20 \cite{XiangWWL20}. }
    \label{fig:comparison}
\end{figure*}

We compare the accuracy between SEDML and the plaintext baseline where the aggregation process is conducted on plaintext label prediction vectors.
Following \cite{XiangWWL20}, we evaluate different settings of the $\epsilon$ parameter: 2.04, 5.035, and 8.03.
For the SVHN and MNIST datasets, we set the $\delta$ to be $10^{-6}$ and $10^{-5}$ respectively. Because the scale of the SVHN we used is $10^6$ and the scale of the MNIST is $10^{5}$.

Fig. \ref{fig:mnist accuracy} displays the evaluation results on the label accuracy and student model accuracy over the MNIST dataset. The accuracy evaluation results over the SVHN dataset are given in Fig. \ref{fig:svhn accuracy}.
Both the accuracy of MNIST and SVHN will decrease as the number of teachers increases. This is because the increase of the number of teachers will reduce the samples that each client can use for training the teacher model---the total number of training samples in our experiments is fixed. Therefore, the accuracy of the teacher model will be reduced, which in turn will also affect the accuracy of the student model.

It can be seen from the results that the accuracy gap between SEDML and the plaintext baseline is negligible. 
This demonstrates that SEDML promises efficiency in preserving privacy while retaining a comparable student model accuracy. 
This is because the secure comparison is only added in the ciphertext, and the other steps are consistent with the plaintext, so the accuracy will not be affected.

We also investigate the influence of the threshold on the accuracy. The student model accuracy results for MNIST and SVHN are detailed in Fig. \ref{fig:threshold}, when varying the threshold from 0.3 to 0.9. The number of queried samples is fixed to 9000 and the number of classes is fixed to 10. The left subfigure in Fig. \ref{fig:threshold} shows the student model accuracy on MNIST, where $\epsilon$ = 8.02 and $\delta$ = 10$^{-5}$. 
The right subfigure in Fig. \ref{fig:threshold} shows the student model accuracy on SVHN, where $\epsilon$ = 8.02 and $\delta$ = 10$^{-6}$.
It is observed that the optimal threshold is between 0.5 and 0.6.
Specifically, for the MNIST dataset, the optimal threshold is about 0.5, while for the SVHN dataset, the optimal threshold is about 0.6.
If the threshold is too small, some wrong labels with the same number of votes as the ground truth label will be falsely regarded as the final consensus label. This adds noisy training samples when training the student model, degrading its accuracy. 
If the threshold is too high, the votes of some correct labels can not exceed the threshold and are eventually discarded. 
In such case, less useful training samples are involved when training the student model, which can also deteriorate the student model accuracy. Therefore, the best threshold lies between 0.5 and 0.6. This empirical observation aligns with the prior work \cite{XiangWWL20}. 
This also accounts for why we choose 0.6 as the default threshold in our experiments.

\subsection{Comparison with Prior Art}\label{sec:comparison}

We compare SEDML with the state-of-the-art privacy-preserving design by Xiang \textit{et al.}~\cite{XiangWWL20}, simply referred to as XWWL20 in the following presentation.
We first make comparison in the computation cost.
%
% \textbf{Computational Costs:}
The left subfigure in Fig. \ref{fig:comparison} compares the running time between SEDML and XWWL20. All phases in SEDML are at least $32\times$ faster than XWWL20. This is because SEDML only builds on lightweight cryptography, as opposed to the expensive homomorphic encryption in XWWL20. 
The overall running time is $43\times$ lower than that of XWWL20.
We also compare the communication cost of SEDML with that of XWWL20.
The results are given in the right subfigure of Fig.~\ref{fig:comparison}.

Overall, our communication cost is $1.23 \times$ less than that of XWWL20.

\section{Related Work}\label{sec:related}

Papernot \textit{et al.} \cite{papernot2016semi} propose a knowledge transfer framework for deep learning which aggregates the label predictions from a teacher ensemble to train a student model.
Later, Papernot \textit{et al.} \cite{papernot2018scalable} present the formal PATE framework, which introduces new noisy aggregation mechanisms and greatly improve the accuracy upon the prior work  \cite{papernot2016semi}.
In \cite{zhang2020towards}, Zhang \textit{et al.} propose to leverage generative adversarial nets to combine advanced noisy label training mechanisms and the PATE framework to further improve accuracy.
Despite being appealing, the above works operate in the plaintext domain and do not provide confidentiality protection for the label predictions collected from the teacher models.

To counter the confidentiality issue in the PATE framework as aforementioned, Xiang \textit{et al.}~\cite{XiangWWL20} present a homomorphic encryption-based design which leverages homomorphic encryption to encrypt the individual label predictions from the teacher models and employ two non-colluding servers to conduct aggregation in the homomorphic ciphertext domain.
Their solution, however, is unsatisfactory due to the following limitations.

Firstly, their design requires the number of clients that will submit label predictions to be fixed in advance.
So if any client fails to submit the label predictions later, their design will not correctly work out.
Our SEDML design does not have such requirement on fixing the number of clients in advance.
Secondly, the interactions between each client and the servers are not one-off in their design.
In particular, after the secure threshold check for a training example, all the clients and the cloud servers need to interact again.
So the clients should keep staying online until the secure threshold check has been performed for all the queried samples.
If any client fails to participate in the second round of interaction, their design cannot correctly proceed again.
In contrast, the clients in SEDML can just go offline after sending their encrypted label predictions to the cloud servers.
Last not but least, their design relies on expensive homomorphic encryption and is much slower than our SEDML which only uses lightweight cryptographic techniques.

\section{ Conclusion}\label{sec:conclusion}
To securely and efficiently harness the rich distributed isolated data, we have proposed SEDML, a new protocol that leverages the knowledge from distributed teacher models to train a student model.
SEDML relies on the lightweight additive secret sharing to allow secure and efficient aggregation of the individual label predictions collected from the teacher models.
We have performed extensive experimental evaluations on two popular real-world datasets MNIST and SVHN.
The experiment results have demonstrated that the accuracy performance in SEDML is comparable to the plaintext baseline, and that SEDML greatly improves upon the state-of-the-art work in both computation and communication.

\section{Acknowledgment}
We acknowledge support from the National Natural Science Foundation of China (62002167, 61702268) and National Natural Science Foundation of JiangSu (BK20200461).

\bibliography{References}

\begin{thebibliography}{10}
\expandafter\ifx\csname url\endcsname\relax
  \def\url#1{\texttt{#1}}\fi
\expandafter\ifx\csname urlprefix\endcsname\relax\def\urlprefix{URL }\fi
\expandafter\ifx\csname href\endcsname\relax
  \def\href#1#2{#2} \def\path#1{#1}\fi

\bibitem{lecun2015deep}
Y.~LeCun, Y.~Bengio, G.~Hinton, Deep learning, Nature 521~(7553) (2015) 436.

\bibitem{bakator2018deep}
M.~Bakator, D.~Radosav, Deep learning and medical diagnosis: A review of
  literature, Multimodal Technologies and Interaction 2~(3) (2018) 47.

\bibitem{gdpr}
Europe, General data protection regulation, \url{https://gdpr-info.eu/},
  accessed Feb 05, 2021 (2016).

\bibitem{cpra}
U.~States, California privacy rights act,
  \url{https://www.cookiebot.com/en/cpra/}, accessed Feb 05, 2021 (2020).

\bibitem{cdsl}
China, China data security law,
  \url{https://www.china-briefing.com/news/a-close-reading-of-
  chinas-data-security-law-in-effect-sept-1-2021/}, accessed Sep 21, 2021
  (2021).

\bibitem{mohassel2017secureml}
P.~Mohassel, Y.~Zhang, Secureml: A system for scalable privacy-preserving
  machine learning, in: 2017 IEEE symposium on security and privacy (SP), IEEE,
  2017, pp. 19--38.

\bibitem{najafabadi2015deep}
M.~M. Najafabadi, F.~Villanustre, T.~M. Khoshgoftaar, N.~Seliya, R.~Wald,
  E.~Muharemagic, Deep learning applications and challenges in big data
  analytics, Journal of big data 2~(1) (2015) 1--21.

\bibitem{chen2014big}
X.-W. Chen, X.~Lin, Big data deep learning: challenges and perspectives, IEEE
  access 2 (2014) 514--525.

\bibitem{gao2021evaluation}
Y.~Gao, M.~Kim, C.~Thapa, S.~Abuadbba, Z.~Zhang, S.~A. Camtepe, H.~Kim,
  S.~Nepal, Evaluation and optimization of distributed machine learning
  techniques for internet of things, arXiv preprint arXiv:2103.02762 (2021).

\bibitem{mcmahan2017communication}
B.~McMahan, E.~Moore, D.~Ramage, S.~Hampson, B.~A. y~Arcas,
  Communication-efficient learning of deep networks from decentralized data,
  in: Artificial intelligence and statistics, PMLR, 2017, pp. 1273--1282.

\bibitem{rahman2018membership}
M.~A. Rahman, T.~Rahman, R.~Lagani{\`e}re, N.~Mohammed, Y.~Wang, Membership
  inference attack against differentially private deep learning model., Trans.
  Data Priv. 11~(1) (2018) 61--79.

\bibitem{alves2019mlprivacyguard}
T.~A. Alves, F.~M. Fran{\c{c}}a, S.~Kundu, Mlprivacyguard: Defeating confidence
  information based model inversion attacks on machine learning systems, in:
  Proceedings of the 2019 on Great Lakes Symposium on VLSI, 2019, pp. 411--415.

\bibitem{khosravy2021model}
M.~Khosravy, K.~Nakamura, Y.~Hirose, N.~Nitta, N.~Babaguchi, Model inversion
  attack: Analysis under gray-box scenario on deep learning based face
  recognition system, KSII Transactions on Internet and Information Systems
  (TIIS) 15~(3) (2021) 1100--1118.

\bibitem{papernot2016semi}
N.~Papernot, M.~Abadi, U.~Erlingsson, I.~Goodfellow, K.~Talwar, Semi-supervised
  knowledge transfer for deep learning from private training data, in: Proc. of
  ICLR, 2017.

\bibitem{papernot2018scalable}
N.~Papernot, S.~Song, I.~Mironov, A.~Raghunathan, K.~Talwar,
  {\'{U}}.~Erlingsson, Scalable private learning with {PATE}, in: Proc. of
  ICLR, 2018.

\bibitem{XiangWWL20}
L.~Xiang, L.~Wang, S.~Wang, B.~Li, Achieving consensus in privacy-preserving
  decentralized learning, in: Proc. of IEEE ICDCS, 2020.

\bibitem{Beaver91a}
D.~Beaver, Efficient multiparty protocols using circuit randomization, in:
  Proc. of CRYPTO, 1991.

\bibitem{Corrigan-GibbsB17}
H.~Corrigan{-}Gibbs, D.~Boneh, Prio: Private, robust, and scalable computation
  of aggregate statistics, in: Poc. of USENIX NSDI, 2017, pp. 259--282.

\bibitem{DworkMNS06}
C.~Dwork, F.~McSherry, K.~Nissim, A.~D. Smith, Calibrating noise to sensitivity
  in private data analysis, in: Proc. of TCC, 2006.

\bibitem{Mironov17}
I.~Mironov, R{\'{e}}nyi differential privacy, in: Prof. of IEEE CSF, 2017.

\bibitem{RiaziWTS0K18}
M.~S. Riazi, C.~Weinert, O.~Tkachenko, E.~M. Songhori, T.~Schneider,
  F.~Koushanfar, Chameleon: {A} hybrid secure computation framework for machine
  learning applications, in: Proc. of ACM AsiaCCS, 2018.

\bibitem{0002SKG19}
N.~Agrawal, A.~S. Shamsabadi, M.~J. Kusner, A.~Gasc{\'{o}}n, {QUOTIENT:}
  two-party secure neural network training and prediction, in: Proc. of ACM
  CCS, 2019.

\bibitem{tfencrypted}
{Cape Privacy}, {Tf encrypted: Encrypted deep learning in tensorflow.}, online
  at \url{https://tf-encrypted.io/} (2020).

\bibitem{crypten2020}
B.~Knott, S.~Venkataraman, A.~Hannun, S.~Sengupta, M.~Ibrahim, L.~van~der
  Maaten, Crypten: Secure multi-party computation meets machine learning, in:
  Proceedings of the NeurIPS Workshop on Privacy-Preserving Machine Learning,
  2020.

\bibitem{li2020label}
Z.~Li, Y.~Zhang, Label-leaks: Membership inference attack with label, arXiv
  e-prints (2020) arXiv--2007.

\bibitem{erdogan2021unsplit}
E.~Erdogan, A.~Kupcu, A.~E. Cicek, Unsplit: Data-oblivious model inversion,
  model stealing, and label inference attacks against split learning, arXiv
  preprint arXiv:2108.09033 (2021).

\bibitem{ZhengDW19}
Y.~Zheng, H.~Duan, C.~Wang, Towards secure and efficient outsourcing of machine
  learning classification, in: Proc. of ESORICS, 2019.

\bibitem{LiuZYY21}
X.~Liu, Y.~Zheng, X.~Yuan, X.~Yi, Medisc: Towards secure and lightweight deep
  learning as a medical diagnostic service, in: Proc. of ESORICS, 2021.

\bibitem{netzer2011reading}
Y.~Netzer, T.~Wang, A.~Coates, A.~Bissacco, B.~Wu, A.~Y. Ng, Reading digits in
  natural images with unsupervised feature learning (2011).

\bibitem{lecun1998gradient}
Y.~LeCun, L.~Bottou, Y.~Bengio, P.~Haffner, Gradient-based learning applied to
  document recognition, Proceedings of the IEEE 86~(11) (1998) 2278--2324.

\bibitem{zhang2020towards}
Q.~Zhang, J.~Ma, J.~Lou, L.~Xiong, X.~Jiang, Towards training robust private
  aggregation of teacher ensembles under noisy labels, in: 2020 IEEE
  International Conference on Big Data (Big Data), IEEE, 2020, pp. 1103--1110.

\end{thebibliography}

\end{document}